%
\documentclass[runningheads]{comsoc2020}
\usepackage{amsmath}
\usepackage{array,xspace,multirow,hhline,graphicx,xcolor,tikz,colortbl,tabularx,amsmath,amssymb,amsfonts}
\usetikzlibrary{shapes}
\usetikzlibrary{arrows}
\usepackage{amsthm}
\usepackage{times}

\newcommand{\citet}[1]{\citeauthor{#1}~\shortcite{#1}}
\newcommand{\citep}{\cite}

\usepackage{graphicx}
\usepackage{caption}
\usepackage{natbib}
\usepackage{booktabs}
\usepackage{xcolor}
\usepackage{wrapfig}
\usepackage{amsthm}
\newcommand\eat[1]{}
\usepackage{pgfplots}
\usepackage{bbm}
\usepackage{url}

\usepackage{verbatim,ifthen}
\usepackage{pifont}
\usepackage{ifthen}
\usepackage{pgflibraryshapes}

\usepackage{nicefrac}
\usetikzlibrary{arrows}
\usepackage{longtable}

	\usepackage{varioref}

\tikzset{
  jumpdot/.style={mark=*,solid},
  excl/.append style={jumpdot,fill=white},
  incl/.append style={jumpdot,fill=black},
  rexcl/.append style={jumpdot,color=red,fill=white},
  rincl/.append style={jumpdot,fill=black,color=red},
}

\usepackage[ruled]{algorithm2e} 

\SetAlFnt{\small}
\SetAlCapFnt{\small}
\SetAlCapNameFnt{\small}
\SetAlCapHSkip{0pt}
\IncMargin{-\parindent}
\theoremstyle{definition}
\newtheorem{example}{Example}

\DeclareMathOperator*{\argmax}{arg\,max}

\usepackage{mathtools}

\DeclarePairedDelimiter{\floor}{\lfloor}{\rfloor}

\definecolor{gray(x11gray)}{rgb}{0.75, 0.75, 0.75}

%
\theoremstyle{definition}
\newtheorem{definition}{Definition}
\newtheorem{proposition}{Proposition}
\newtheorem{corollary}{Corollary}
\newtheorem{theorem}{Theorem}
\newtheorem{lemma}{Lemma}

\newenvironment{customthm}[1]
  {\innercustomthm}
  {\endinnercustomthm}

\newenvironment{customlem}[1]
  {\innercustomlem}
  {\endinnercustomlem}
  \title{Nash Welfare and Facility Location}
  \author{Alexander Lam and Haris Aziz and Toby Walsh}
  \pagestyle{empty}
\begin{document}
%

%
%
%
%
%
\begin{abstract}
We consider the problem of locating a facility to serve a set of agents located along a line. The Nash welfare objective function, defined as the product of the agents' utilities, is known to provide a compromise between fairness and efficiency in resource allocation problems. We apply this welfare notion to the facility location problem, converting individual costs to utilities and analyzing the facility placement that maximizes the Nash welfare. We give a polynomial-time approximation algorithm to compute this facility location, and prove results suggesting that it achieves a good balance of fairness and efficiency. Finally, we take a mechanism design perspective and propose a strategy-proof mechanism with a bounded approximation ratio for Nash welfare.
\end{abstract}
\section{Introduction}
In the facility location problem, we are tasked with placing a facility in an optimal location along a line to serve a set of agents. Each agent incurs a cost proportional to their distance from the facility, and we can also interpret this as a case of single-peaked preferences. The facility location problem generalizes to many real-life problems. Some geographical examples include the placement of a public facility along a long street \citep{Miya01a}, or the placement of a wastewater plant along a river. Another example could have an agent's location along the line indicating their position on the political spectrum, and the facility placement could be the choice of representative that most optimally reflects the opinions of the agents \citep{FFG16a}. This problem can also be extended to graphs, and applied to finding an optimal router placement on a network \citep{Gao12a}.

In our approach to this problem, we convert the individual costs to utilities, and analyse the facility location placement which maximizes the product of these utilities, also known as the Nash welfare. We remark that the use of utilities allows us to quantify how little or how much each agent benefits from the facility placement. 
Denoting the mechanism that outputs the facility location maximizing Nash welfare as \textsc{NashFL}, we aim to address questions relating to its computation and its approximation of fairness and efficiency objectives. We also investigate the 
trade-off between strategy-proofness and maximizing Nash welfare.

We first approach this idea from a computational perspective. We find that the \textsc{NashFL} output can take an irrational form and hence cannot be represented by the rationals. Hence we explore the approximation of the optimal Nash solution. 

In many areas of social choice, the solution maximizing fairness often sacrifices efficiency, and vice versa. Consequently, researchers have turned to the solution maximizing the Nash welfare, often considered to be an intermediary objective function providing a compromise between fairness and efficiency measures. 
By analyzing how well the optimal Nash solution approximates egalitarian and utilitarian objectives in the worst case, we determine whether the fairness-efficiency tradeoff is retained in the facility location problem. We also investigate which fairness axioms are met by the Nash solution.

Finally, we look at the strategic aspects of the problem, considering the context where the agents' locations are private information, and that they may strategically misreport their location to attain a higher utility. 
Strategy-proof mechanisms, which make it optimal for agents to report truthfully, are used to discourage such strategic behaviour. We investigate how well the Nash welfare can be approximated by a strategy-proof mechanism, and examine the properties of one such mechanism with a bounded approximation ratio.

\paragraph{Contributions}

In this work, we are the first to provide an analysis of the facility location maximizing the Nash welfare. By proving that the exact facility location cannot always be represented by the rationals, we show that it suffices to use an approximation algorithm, and subsequently give a polynomial-time algorithm that computes this facility location within a specified additive error. We then show that unlike the Midpoint or the Median solution, the Nash solution satisfies Unanimous Fair Share, a fairness property that guarantees a proportional amount of utility for each coalition of agents at the same location.
We also prove approximation ratio guarantees of egalitarian, utilitarian and Nash objectives by certain facility location mechanisms; Table 1 summarizes these findings. Lastly, we restrict to strategy-proof mechanisms and prove a somewhat negative result: no strategy-proof mechanism can provide a constant factor approximation of the optimal Nash welfare. Nevertheless, we also give a strategy-proof mechanism with a bounded approximation ratio for Nash welfare. Due to space restrictions, theorems and lemmas lacking proofs are proved in the appendix.
\begin{table}[h!]
	 \begin{center}
	 \scalebox{1}{
 \begin{tabular}{l  c c c c}
 \toprule
& OptEgal & OptUtil & OptNash 
\\
\midrule
Midpoint&$1$ & $2-\frac{2}{n}$ & $O(\mathbf{2^n})$\\
Median$^*$&$\infty$ &$1$ & $\infty$ \\
NashFL&$\frac{n}{2}$ &$\mathbf{[1.2,2]}$ & $1$\\
MidOrNearest$^*$ & $\frac{3}{2}$ & $2-\frac{2}{n}$ & $2^{n-2}$\\
  \bottomrule
 \end{tabular}
}
 \end{center}
   \caption{Worst-case approximation ratio guarantees of social welfare properties by specific solutions. $^*$ indicates that the mechanism is strategy-proof. All approximation ratio bounds are tight except those in bold.}
   \label{table:summary}
 \end{table}
 \vspace{-0.2cm}
\section{Related Work}



The facility location problem has been widely researched in mathematics for centuries, appearing as early as 1643 as the Fermat-Torricelli problem \citep{Ferm91a}. In recent decades, it has been applied to operations research from an optimization perspective, with a focus on minimizing transport costs. An overview of approaches and results can be found in \citep{Hekm09a} and \citep{MNS09a}. Many variations of the problem have been studied in the literature, such as the analysis of capacity-constrained facilities \citep{WZZ06a} and the consideration of distant agents \citep{CKMN01a}. A review of research surrounding facility location models when there is uncertainty is given by \citet{Snyd06a}. When dealing with agents that arrive in an online fashion, randomized algorithms can be used to maintain a set of facilities \citep{Meye01a}. Many variants of the facility location problem are known to be NP-Hard, and hence there has been much research on approximation algorithms, such as those proposed in \citep{STA97a}.
These approximations have gradually improved over the years, and been approached in differing angles \citep{ChWi99a, GuKh99a, ChGu05a}. 

In recent years, there has been much research on facility location mechanisms satisfying strategy-proofness, building off the analyses of single-peaked preferences by papers such as \cite{Mou80a, BoJo83a}. Research specific to strategy-proof facility location mechanisms has been initiated by \citet{PrTe13a}, where they give approximation ratio bounds for deterministic and randomized mechanisms, under the key objectives of social cost and maximum cost. Many variations of this area of study have been considered, such as the algorithmic and mechanism design approaches to capacitated facilities \citep{ACLP20a, ACL+20a}, the inclusion of externalities \citep{LMX+19a} and the analysis of weighted agents \citep{ZhLi14a}.
The facility location game has also been applied to activity scheduling \citep{XLD20a}, in which the activities are denoted by facilities taking up a bounded interval on a timeline.

The idea of converting costs to utilities in the facility location problem was proposed by \citet{Mou03a}, and it has since been used in various forms. While Moulin defines agent utilities as $1-cost$, which has also been used in \citep{ACLP20a}, \citep{MLYZ19a} scale an agent's `satisfaction' to 0 when the facility is as far as possible from the agent, and 1 when the facility is at the agent's location. While both settings restrict the interval to $[0,1]$, an agent at location $\frac{1}{2}$ with the facility placed at $0$ has $\frac{1}{2}$ utility under our definition yet $0$ satisfaction. Their paper also discusses the obnoxious facility location game, introduced by \citet{CYZ11a}, in which agents desire to be as far away from the facility as possible. The authors later extended their analysis from paths to networks, and formally defined utility in this setting as an agent's distance from the nearest facility \citep{CYZ13a}.

Our work is not the first to examine objectives differing from social cost and maximum cost: the least squares objective in both line and tree networks is discussed in \citep{FeWi13a}. Contrasting with this convex objective function, \citep{FoTz16a} presents a group strategy-proof, randomized mechanism for multiple facilities which achieves a bounded approximation ratio for any concave cost function. The Nash welfare objective differs from these cases as it is neither concave nor convex.

The idea of maximizing the product of utilities originates in the analysis of the bargaining problem in \citep{Nash50a}. The Nash welfare objective function has since been used in several areas of the social choice literature to find a reasonable compromise between fairness and efficiency. In the fair allocation of divisible goods, the allocations maximizing Nash social welfare coincide with those resulting from the competitive equilibrium from equal incomes (CEEI) solution, and therefore satisfy envy-freeness and Pareto optimality \citep{ArIn93a}.
When allocating indivisible goods, the Nash solution retains its Pareto optimality and achieves envy freeness up to one good \citep{CKM+16a}. In \citep{ABM19a}, a variation of the participatory budgeting model is discussed, and it is found that the optimal Nash solution is not only ex-ante efficient, but also satisfies certain fairness guarantees.
The Nash welfare has also been proposed as an objective for the facility location model of ambulance placement \citep{JaMa20a}, though in this work, the Nash welfare is a function of ambulance durations in certain configurations rather than distances.

\section{Preliminaries}
In our setting, we  have a set $N=\{1,\dots,n\}$ of $n$ agents. Each agent $i\in N$ is at location $x_i\in [0,1]$. We define the agent location profile as the vector $\mathbf{x}=(x_1,\dots,x_n)$, and we assume the locations are ordered such that $x_1\leq \dots \leq x_n$.
The agents are served by a single facility. This facility is placed by a deterministic mechanism, which is a function $f: [0,1]^n\rightarrow [0,1]$ that takes the location profile $\mathbf{x}$ and outputs a location for the facility. Under facility location $y$, agent $i\in N$ incurs cost $c(y,x_i)=|y-x_i|$ and has utility $u(y,x_i):=1-c(y,x_i)$.


Given an agent location profile $\mathbf{x}$ and facility location $y$, the utilitarian social welfare of a mechanism is the sum of the agents' utilities: $USW(y,\mathbf{x}):=\sum_i u(y,x_i)$, and the egalitarian social welfare of a mechanism is the minimum utility achieved by an agent: $ESW(y,\mathbf{x}):=\min_i u(y,x_i)$.

Finally, we define the Nash welfare of a mechanism as the product of the agent utilities: $Nash(y,\mathbf{x}):=\prod_{i}u(y,x_i)$. In this paper, we are primarily interested the mechanism which places the facility such that Nash welfare is maximized. We will define this mechanism as $\textsc{NashFL}(\mathbf{x}):=\argmax_{y\in[0,1]}\prod_{i}u(y,x_i)$.
\section{The Nash Solution: Structural Properties and Computation}
In \citep{Mou03a}, Moulin discusses locating a facility along a line so as to maximize Nash welfare. He observes that ``this optimum is neither easy to compute nor to interpret.'' We demonstrate that the optimum can take an irrational form.

For $2$ agents, the facility location optimizing Nash welfare is simply the midpoint of the two agents, so rational agent locations imply a rational facility location. However, for $3$ agents, the Nash welfare becomes a cubic polynomial. The derivative is therefore a quadratic polynomial, so it is intuitive that an irrational facility location can arise from rational agent locations. Below, we find an exact, analytical solution for the $\textsc{OptNash}$ output when there are exactly $3$ agents.

\begin{lemma}\label{lemma:irrational}
Suppose there are 3 agents at locations $x_1$, $x_2$ and $x_3$. Let $c=1-x_2^2+x_1x_2+x_2x_3-x_1x_3$. If $2x_1-2x_2+c\geq 0$ and $2x_2-2x_3+c\geq 0$, then \textsc{NashFL} places the facility at agent $x_2$. If $2x_1-2x_2+c\geq 0$ and $2x_2-2x_3+c<0$, then \textsc{NashFL} places the facility at location
\[\frac{(1+\alpha)- \sqrt{(1+\alpha)^2-3(2x_3-\beta)}}{3},\]
whilst if $2x_1-2x_2+c<0$ and $2x_2-2x_3+c\geq 0$, then \textsc{NashFL} places the facility at location
\[\frac{(-1+\alpha)+ \sqrt{(-1+\alpha)^2+3(2x_1+\beta)}}{3},\]
where $\alpha=x_1+x_2+x_3$ and $\beta=1-x_1x_2-x_2x_3-x_1x_3$.
\end{lemma}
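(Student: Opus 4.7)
The plan is as follows. First, the maximizing $y$ must lie in $[x_1, x_3]$: if $y < x_1$ (resp.\ $y > x_3$), moving $y$ rightward (resp.\ leftward) strictly decreases every distance $|y - x_i|$, so every utility strictly increases and the product of utilities strictly grows. Within $[x_1, x_3]$ the Nash welfare is piecewise cubic with a breakpoint at $y = x_2$: on $[x_1, x_2]$ it equals $g(y) = (1+x_1-y)(1-x_2+y)(1-x_3+y)$, and on $[x_2, x_3]$ it equals $f(y) = (1+x_1-y)(1+x_2-y)(1-x_3+y)$. The two pieces agree continuously at $x_2$ but generally have different one-sided derivatives there.

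Next I would compute the one-sided derivatives at the break point by direct differentiation, obtaining
\[
g'(x_2) = 2x_1 - 2x_2 + c, \qquad f'(x_2) = -(2x_2 - 2x_3 + c),
\]
with $c = 1 - x_2^2 + x_1 x_2 + x_2 x_3 - x_1 x_3$ precisely as in the lemma. So the two sign hypotheses in the statement are exactly $g'(x_2) \geq 0$ and $f'(x_2) \leq 0$.

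The decisive structural observation is that the Nash welfare is log-concave on $[x_1, x_3]$: each $\log(1 - |y - x_i|)$ is concave on its positivity set, and a sum of concave functions is concave. Log-concavity implies quasi-concavity and hence unimodality of the Nash welfare on $[x_1, x_3]$. Combining unimodality with the derivative signs at $x_2$ gives the lemma's case split: if $g'(x_2) \geq 0$ and $f'(x_2) \leq 0$, then $x_2$ is the peak; if $g'(x_2) \geq 0$ but $f'(x_2) > 0$, the peak lies strictly inside $[x_2, x_3]$ at a zero of $f'$; symmetrically, if $f'(x_2) \leq 0$ but $g'(x_2) < 0$, the peak lies strictly inside $[x_1, x_2]$ at a zero of $g'$. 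The only remaining logical combination, which would make $x_2$ a strict local minimum, is ruled out by unimodality.

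To extract the closed forms in the two non-trivial cases, I would solve $f'(y) = 0$, which after expansion in $\alpha = x_1+x_2+x_3$ and $\beta = 1 - x_1 x_2 - x_2 x_3 - x_1 x_3$ reduces to the quadratic $3y^2 - 2(1+\alpha)y + (2x_3 - \beta) = 0$. Since $f$ has positive leading coefficient $+1$ (so $f$ increases, peaks, dips, increases), its local maximum is attained at the \emph{smaller} root of $f'$, which is exactly the first displayed expression. The third case is the mirror computation: $g'(y) = 0$ reduces to $3y^2 - 2(\alpha - 1)y - (2x_1+\beta) = 0$, and because $g$ has leading coefficient $-1$, the relevant critical point is the \emph{larger} root, reproducing the second displayed formula. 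The main obstacle is simply the careful bookkeeping in these cubic expansions; no conceptual work beyond log-concavity and the quadratic formula is required.
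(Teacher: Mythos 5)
Your proof is correct, and your computations check out: the one-sided derivatives at the kink do evaluate to $g'(x_2)=2x_1-2x_2+c$ and $f'(x_2)=-(2x_2-2x_3+c)$, and the two quadratics $3y^2-2(1+\alpha)y+(2x_3-\beta)$ and $3y^2-2(\alpha-1)y-(2x_1+\beta)$ with the indicated root choices reproduce the lemma's formulas. The terminal computation is the same as the paper's (expand each cubic piece, apply the quadratic formula, select the smaller or larger root according to the sign of the leading coefficient), but you reach the case split by a genuinely different route. The paper computes the unconstrained local maximizer of each cubic piece and then derives, by algebraic manipulation of $y_{max}\lessgtr x_2$, that the lemma's sign conditions characterize when that maximizer falls outside its interval; unimodality is supplied separately by the paper's single-peakedness theorem, which is proved via a lengthy sign analysis of the derivative on each subinterval. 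You instead identify the lemma's hypotheses directly as the signs of the one-sided derivatives at the breakpoint and invoke log-concavity of the Nash welfare (each $\log(1-|y-x_i|)$ being concave) to get unimodality in one line; this also cleanly disposes of the fourth sign combination, since concavity of $\log \mathrm{Nash}$ forces $g'(x_2)\geq f'(x_2)$. Your framing is arguably more transparent and avoids the paper's equivalence manipulations. One small point you leave implicit, in the case $f'(x_2)>0$: you should note that the peak cannot sit at $x_3$ itself (the right-hand log-derivative there is $1-\tfrac{1}{1+x_1-x_3}-\tfrac{1}{1+x_2-x_3}\leq -1<0$), so $f'$ genuinely vanishes in $(x_2,x_3)$ and the smaller root is the maximizer; the symmetric remark applies at $x_1$. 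The paper's proof is equally terse on this point, so this is a polish issue rather than a gap.
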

\begin{theorem}
There exists a profile of rational agent locations such that \textsc{NashFL} places a facility at an irrational location.
\end{theorem}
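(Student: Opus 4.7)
The plan is to exhibit an explicit rational profile $(x_1, x_2, x_3)$ and invoke Lemma~\ref{lemma:irrational} to show that the closed-form expression for \textsc{NashFL}$(\mathbf{x})$ reduces to something of the form (rational) $\pm$ (irrational), forcing the output to be irrational. So the overall structure of the proof is a single carefully chosen witness, followed by a direct calculation.

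First I would pick an asymmetric profile so that the facility does not collapse onto $x_2$. A natural candidate is $x_1 = 0$, $x_2 = 1/4$, $x_3 = 1$. I would compute $c = 1 - x_2^2 + x_1 x_2 + x_2 x_3 - x_1 x_3$ and verify which case of Lemma~\ref{lemma:irrational} applies; with this choice one obtains $c = 19/16$, hence $2x_1 - 2x_2 + c = 11/16 \geq 0$ while $2x_2 - 2x_3 + c = -5/16 < 0$, placing us in the second (non-trivial) case. Thus the lemma gives
\[
\textsc{NashFL}(\mathbf{x}) \;=\; \frac{(1+\alpha) - \sqrt{(1+\alpha)^2 - 3(2x_3 - \beta)}}{3},
\]
with $\alpha = x_1 + x_2 + x_3$ and $\beta = 1 - x_1 x_2 - x_2 x_3 - x_1 x_3$.

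Plugging in yields $\alpha = 5/4$, $\beta = 3/4$, so $(1+\alpha)^2 = 81/16$ and $3(2x_3 - \beta) = 15/4 = 60/16$, giving a radicand of $21/16$. Since $21$ is not a perfect square, $\sqrt{21}$ is irrational, hence
\[
\textsc{NashFL}(\mathbf{x}) \;=\; \frac{9 - \sqrt{21}}{12}
\]
is irrational. This is the desired witness.

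The only genuine obstacle is ensuring the chosen profile actually lands in one of the two non-degenerate branches of Lemma~\ref{lemma:irrational} (rather than the degenerate branch placing the facility at $x_2$, which would trivially be rational). I would handle this simply by computing the two inequalities on $c$ directly for the chosen profile; picking asymmetric locations like $(0, 1/4, 1)$ makes this a one-line check. No further technical ingredients are needed.
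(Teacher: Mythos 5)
Your proof is correct and follows essentially the same strategy as the paper's: exhibit a rational witness profile, verify which branch of Lemma~\ref{lemma:irrational} applies, and read off an irrational closed form. The only difference is the choice of witness (the paper uses $(1/7,2/7,6/7)$, giving $(16-\sqrt{91})/21$, while your $(0,1/4,1)$ gives $(9-\sqrt{21})/12$); your arithmetic checks out.
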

\begin{proof}
Suppose we have 3 agents at locations $x_1=\frac17$, $x_2=\frac27$ and $x_3=\frac67$. By substituting these values into Lemma \ref{lemma:irrational}, we find that the optimal facility location maximizing Nash welfare is at $\frac{16- \sqrt{91}}{21}$, which is irrational. 
\end{proof}
A consequence of this result is that the exact solution cannot always be represented by the rationals, so when computing the \textsc{NashFL} solution, it can suffice to use an approximation algorithm. Next, we give an algorithm that computes an approximate solution in polynomial time.

	\begin{theorem}\label{comp}
		The solution with the maximum Nash welfare can be computed within additive error of $\epsilon>0$ in time that is polynomial in the input and $1/\epsilon$.
		\end{theorem}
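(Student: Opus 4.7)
The plan is to exploit unimodality of Nash welfare on $[0,1]$, then approximate the maximizer by ternary search. First I would establish that $\log Nash(y,\mathbf{x}) = \sum_i \log(1 - |y - x_i|)$ is concave in $y$ on the region where every factor is strictly positive: each summand $\log(1 - |y - x_i|)$ is smooth and concave on either side of $x_i$ (its second derivative there is $-1/(1-|y-x_i|)^2 \leq 0$), and continuous at $x_i$ with a downward kink (left derivative strictly exceeds right derivative), so it is concave; the sum is then concave as well. Hence Nash welfare is log-concave, hence quasi-concave, and therefore unimodal on $[0,1]$.

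Given unimodality, I would apply ternary search to the interval $[0,1]$. At each iteration, evaluate $Nash(\cdot,\mathbf{x})$ at the two interior third-points $m_1 = l + (r-l)/3$ and $m_2 = l + 2(r-l)/3$ of the current interval $[l,r]$, and contract to an interval of length $\tfrac{2}{3}(r-l)$ that still contains the optimum (keeping $[l,m_2]$ or $[m_1,r]$ depending on which value is larger). After $k = \lceil \log_{3/2}(1/\epsilon)\rceil$ iterations the remaining interval has length at most $\epsilon$; output any point inside it as the approximate facility location, which is then within additive error $\epsilon$ of $\textsc{NashFL}(\mathbf{x})$. Each evaluation of Nash welfare takes $O(n)$ arithmetic operations, giving total running time $O(n \log(1/\epsilon))$, which is polynomial in the input size and $1/\epsilon$. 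If instead one interprets additive error as additive error in the Nash welfare value, the same approach works: Nash welfare is $n$-Lipschitz (its derivative, where defined, is bounded by $n$ since each factor lies in $[0,1]$), so shrinking the interval to width $\epsilon/n$ using $O(\log(n/\epsilon))$ iterations suffices.

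The one point requiring care is that $\log Nash$ can equal $-\infty$ on the boundary of $[0,1]$ when some agent sits at an endpoint and $|y-x_i|=1$, which complicates a formal statement of concavity on the closed interval. This is easily sidestepped by running ternary search on Nash welfare itself rather than on its logarithm: Nash welfare is continuous and non-negative on $[0,1]$, and its quasi-concavity (inherited from log-concavity on the positive part together with boundary continuity) ensures that the comparison step of ternary search is sound and that the search interval still shrinks by a factor of $2/3$ per iteration. The main conceptual obstacle is thus just the verification of unimodality; once that is in hand, the algorithmic claim reduces to a standard convergence analysis for ternary search.
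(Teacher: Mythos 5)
Your proposal is correct, but it takes a genuinely different route from the paper. The paper partitions $[0,1]$ into the $n-1$ segments between consecutive agent locations, observes that on each segment every utility is linear in $y$, writes the restricted problem as the convex program $\min -\sum_i \log u_i$ subject to linear constraints, and appeals to the separation-oracle machinery of Vazirani (2012) to solve each of the $n-1$ programs approximately in time polynomial in the input and $1/\epsilon$. You instead prove global log-concavity of the Nash welfare directly -- each summand $\log(1-|y-x_i|)$ is concave on each side of its kink with a downward jump in the derivative at $x_i$ -- and then run ternary search on $[0,1]$. Your concavity argument is sound and in fact subsumes the paper's Theorem~\ref{peak} (single-peakedness), which the paper proves separately by a considerably longer case analysis of the derivative on each inter-agent interval; and since each piece of $\log Nash$ is \emph{strictly} concave, there are no plateaus, so the ternary-search comparisons are unambiguous. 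What your approach buys is a self-contained, elementary algorithm with running time $O(n\log(1/\epsilon))$, exponentially better in the dependence on $1/\epsilon$ than the theorem demands, and no black-box citation. What the paper's approach buys is that it generalizes more readily to settings where the feasible region is described by arbitrary linear constraints rather than a single interval. One small point you should tighten: a fully rigorous account must bound the bit-complexity of the comparison $Nash(m_1,\mathbf{x})$ versus $Nash(m_2,\mathbf{x})$, since these are products of $n$ rationals whose denominators grow with the iteration count; this is routine (the probe points have $O(\log(1/\epsilon))$ extra bits, so each comparison costs polynomial time), but it is the one step your write-up leaves implicit, and the paper sidesteps it by delegating to the convex-programming framework.
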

		\begin{proof}
			We consider at most $n-1$ different cases corresponding to the segments $[x_j,x_{j+1}]$ between two reported consecutive agent locations. Within this interval, we can approximately compute the point that maximizes the Nash social welfare as follows. The utility of each agent $i$ is $u_i=1-(x_i-y)$ or $1-(y-x_i)$ depending on whether $x_i$ is right of $x_{j+1}$ or left of $x_j$. Therefore, each agent utility can be captured by linear inequalities. The objective is $\max \prod_{i\in N}u_i=\min -\sum_{i\in N}\log u_i$. 
			
If we restrict the facility's location to the interval $[x_j,x_{j+1}]$, the optimal location in the interval is the solution to the following program.
			\begin{align*}
				&\min -\sum_{i\in N}\log u_i&\\
				&u_i=1-(x_i-y) & \text{ if } x_i\geq x_{j+1}\\
				&u_i=1-(y-x_i) & \text{ if } x_i\leq x_{j}\\
				&y\geq x_{j}&\\
				&y\leq x_{j+1}&
				\end{align*}
We need to solve the above program for each of at most $n-1$ intervals:
\[[x_1,x_2],\ldots, [x_{n-1}, x_n].\] 
Next, we show that for one interval, the optimization can be done almost optimally in time that is polynomial in the input and $1/\epsilon$ where $\epsilon>0$ is the additive error. 		
			
In page 899 of \citep{Vazi12a} a program (1) is defined which by substituting functions $f_i$'s we get the program we need to solve for a particular segment.
The program can be approximately solved if there exists a separation oracle and if the program is indeed feasible. The program is feasible for every interval we consider as the interval does not consist of one point so there exists a point in the interval in which the utility of each agent is strictly positive. Also note that the separation oracle in our case is simply testing the linear constraint in the program which can be easily checked. Hence, the theorem follows.
			\end{proof}


Next, we give some general results regarding the Nash welfare as a function of the facility location. We first note that the Nash welfare is neither a concave nor a convex function, distinguishing it from previous work on concave/convex objective functions. Despite this, the Nash welfare is single-peaked as a function of the facility location. In other words, there is a unique facility placement that maximizes the Nash welfare, and the Nash welfare decreases as the facility location moves away from this optimum. 

\begin{theorem}\label{peak}
The Nash welfare as a function of the facility location is single-peaked.
\end{theorem}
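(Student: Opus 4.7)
The plan is to prove the stronger statement that $\ell(y) := \log Nash(y,\mathbf{x}) = \sum_{i=1}^n \log(1 - |y - x_i|)$ is strictly concave on the set where it is finite, from which single-peakedness of $Nash$ follows because $\log$ is strictly monotone. Concavity (not yet strict) is immediate: each summand is the composition of the increasing concave function $\log$ with the concave function $1 - |y - x_i|$, so each is concave, and a sum of concave functions is concave.

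To upgrade to strict concavity I would analyze the one-sided derivative $D(y) := \ell^+(y)$. On any open sub-interval $(x_j, x_{j+1})$ between consecutive distinct agent locations, $\ell$ is smooth with
\[\ell''(y) = -\sum_{i=1}^n \frac{1}{(1 - |y - x_i|)^2} < 0,\]
so $D$ is strictly decreasing there. At each agent location $x_j$, agent $j$'s contribution to the one-sided derivative jumps from $+1$ just below $x_j$ to $-1$ just above, while all other summands are continuous in $y$; hence $D$ drops strictly across each kink. Combining both effects, $D$ is strictly decreasing on all of $[0,1]$ wherever $\ell$ is finite, which is the standard characterization of strict concavity of a continuous function.

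Given strict concavity, $\ell$ attains its maximum at a unique $y^* \in [0,1]$, and because $D$ is strictly decreasing it has at most one sign change; so $\ell$ is strictly increasing on $[0,y^*]$ and strictly decreasing on $[y^*,1]$, with the boundary cases $y^* = 0$ and $y^* = 1$ handled identically. Exponentiating transfers the same behaviour to $Nash$. The hard part will be carefully handling the non-smoothness: $\ell$ is only piecewise smooth, so I cannot simply appeal to $\ell'' < 0$ globally, and I must verify the downward jumps of $D$ at the $n$ kinks to rule out any plateau in $\ell$. Boundary configurations where some $u(y,x_i) = 0$ (e.g.\ $y = 1$ with $x_1 = 0$) make $\ell = -\infty$, but these have zero Nash welfare and are automatically ruled out as maxima since $Nash$ is strictly positive at interior points nearby.
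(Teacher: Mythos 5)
Your proof is correct. It rests on the same computational fact as the paper's proof --- that the logarithmic derivative $\sum_i u_i'(y)/u_i(y)$, i.e.\ the ``sum of fractions'' $-\sum_{i\le k}\frac{1}{1-y+x_i}+\sum_{i>k}\frac{1}{1+y-x_i}$, is strictly decreasing in $y$, both within each smooth piece and across each kink --- but you package it differently. The paper first invokes the Extreme Value Theorem to obtain a local maximum $y_{OPT}$, argues that it must be a \emph{strict} local maximum because the welfare is a piecewise non-constant polynomial, and then propagates the sign of $dNash/dy$ outward from $y_{OPT}$ interval by interval. You instead observe that each $\log(1-|y-x_i|)$ is concave by composition rules, upgrade to strict concavity of $\log Nash$ via the strictly decreasing one-sided derivative, and read off single-peakedness and uniqueness of the maximizer in one step, with no need for the existence-of-a-local-max bootstrap. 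Your route is cleaner and a little more general (it is really a statement about log-concavity of products of positive concave functions), at the cost of leaning on the standard but slightly technical fact that a continuous function with strictly decreasing right-derivative is strictly concave; the kinks (a drop of $-2$ per agent located there) and the degenerate boundary points where some $u_i=0$ are handled correctly in both arguments.
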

We now show that this Nash welfare optimum is location-invariant, meaning that if each agent's location shifts by the same distance in a certain direction, the \textsc{NashFL} output also shifts by that distance in the same direction.

\begin{lemma}\label{invar}
\textsc{NashFL} is location-invariant.
\end{lemma}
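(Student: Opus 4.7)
The plan is a direct symmetry argument: for any shift $\delta$ such that every $x_i + \delta$ remains in $[0,1]$, I would let $\mathbf{x}' = (x_1+\delta,\dots,x_n+\delta)$ and show that the Nash welfare function for $\mathbf{x}'$ is just a translate of the Nash welfare function for $\mathbf{x}$. Concretely, for any facility location $y$ for which $y+\delta\in[0,1]$,
\[
\mathrm{Nash}(y+\delta,\mathbf{x}') = \prod_{i\in N}\bigl(1-|(y+\delta)-(x_i+\delta)|\bigr) = \prod_{i\in N}\bigl(1-|y-x_i|\bigr) = \mathrm{Nash}(y,\mathbf{x}),
\]
since the $+\delta$ terms cancel inside each absolute value. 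Conversely, any location $y'\in[0,1]$ can be written as $y+\delta$ with $y=y'-\delta$, so the two Nash welfare functions are in bijection.

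Next I would use this identity to transfer the maximizer. Letting $y^* = \textsc{NashFL}(\mathbf{x})$, the identity above shows that $y^*+\delta$ achieves the same Nash welfare for the shifted profile $\mathbf{x}'$ as $y^*$ achieves for $\mathbf{x}$, and that no location $y'\in[0,1]$ for $\mathbf{x}'$ can do better, since such a $y'$ would correspond to $y'-\delta$ doing better for $\mathbf{x}$, contradicting the optimality of $y^*$. Thus $y^*+\delta$ is an optimal facility location for $\mathbf{x}'$.

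To conclude that $y^*+\delta$ is actually equal to $\textsc{NashFL}(\mathbf{x}')$, I invoke Theorem~\ref{peak}: the Nash welfare is single-peaked in the facility location, so the maximizer is unique. Therefore $\textsc{NashFL}(\mathbf{x}') = y^*+\delta = \textsc{NashFL}(\mathbf{x}) + \delta$, which is exactly location invariance. The only mild obstacle is the need to restrict $\delta$ so that both the shifted agent profile and the shifted facility output remain in the feasible interval $[0,1]$; this is a boundary condition rather than a substantive difficulty, and the claim should be stated under the natural assumption that the shift preserves feasibility.
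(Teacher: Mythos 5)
Your proposal is correct and follows essentially the same route as the paper: the same translation identity $\mathrm{Nash}(y+\delta,\mathbf{x}') = \mathrm{Nash}(y,\mathbf{x})$ followed by transferring the maximizer (the paper phrases this transfer as a proof by contradiction, and likewise restricts the shift so that feasibility in $[0,1]$ is preserved). Your explicit appeal to Theorem~\ref{peak} to pin down uniqueness of the maximizer is a small extra touch of care that the paper leaves implicit, but it does not change the argument.
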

This result allows us to simplify our proofs by setting $x_1=0$ without loss of generality. Our next result in this section shows that if an agent's location is shifted in one direction, the \textsc{NashFL} output does not shift in the other direction: it either remains in the same location or shifts in the same direction as the agent. 

\begin{lemma}\label{nonneg}
Suppose we have an agent location profile $\mathbf{x}=(x_1,\dots,x_n).$ If an agent's location $x_i$ is shifted left by some $c\in (0,x_i]$, then under the new agent location profile $\mathbf{x}'=(x_1,\dots,x_i-c,\dots,x_n),$ $\textsc{NashFL}(\mathbf{x}')\leq \textsc{NashFL}(\mathbf{x})$.
\end{lemma}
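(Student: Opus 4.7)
The plan is to use a ratio argument together with the single-peakedness of the Nash welfare (Theorem \ref{peak}). Write $y^{*} = \textsc{NashFL}(\mathbf{x})$ and $y' = \textsc{NashFL}(\mathbf{x}')$. Since the two profiles differ only in coordinate $i$, every factor in the Nash welfare other than the $i$th cancels when we form the ratio
\[
\phi(y) := \frac{Nash(y, \mathbf{x}')}{Nash(y, \mathbf{x})} = \frac{1 - |y - (x_i - c)|}{1 - |y - x_i|},
\]
which is defined wherever the denominator is nonzero.

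The first step is to show $\phi$ is non-increasing on the set where $Nash(y, \mathbf{x}) > 0$. I would split into three intervals determined by the two breakpoints $x_i - c$ and $x_i$: on $y \ge x_i$ the ratio reduces to $1 - c/(1 - y + x_i)$; on $x_i - c \le y < x_i$ it reduces to $(1 - y + x_i - c)/(1 + y - x_i)$; and on $y < x_i - c$ it reduces to $1 + c/(1 - x_i + y)$. A direct check on each piece shows a strict decrease in $y$, and the values agree at the two breakpoints, so $\phi$ is continuous and non-increasing throughout its domain.

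For the second step, I would argue by contradiction. Suppose $y' > y^{*}$. Since Theorem \ref{peak} gives a unique maximizer for each Nash welfare function, we obtain the strict inequalities $Nash(y^{*}, \mathbf{x}) > Nash(y', \mathbf{x})$ and $Nash(y', \mathbf{x}') > Nash(y^{*}, \mathbf{x}')$. Dividing these strict inequalities yields $\phi(y') > \phi(y^{*})$, which contradicts $y' > y^{*}$ combined with $\phi$ being non-increasing. Hence $y' \le y^{*}$.

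I expect the main obstacle to be not the monotonicity of $\phi$ (which is a routine case analysis) but the edge case where $Nash(y, \mathbf{x}) = 0$, so that $\phi$ is undefined. This can happen only when some agent sits at distance exactly $1$ from $y$, which within $[0,1]$ forces $y$ to lie at a boundary and the offending agent to lie at the opposite endpoint. In such degenerate profiles one can either verify the conclusion directly by noting that the strict positivity of $Nash$ at any interior point forces both $y^{*}$ and $y'$ to lie in the interior where $\phi$ is well-defined, or observe that if $y' = 0$ then $y' \le y^{*}$ holds automatically.
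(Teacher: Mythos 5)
Your proof is correct, but it takes a genuinely different route from the paper's. The paper argues directly on derivatives: it fixes the interval containing $\textsc{NashFL}(\mathbf{x})$, observes that the sum $-\sum_{j\le k}\frac{1}{1-y+x_j}+\sum_{j>k}\frac{1}{1+y-x_j}$ is negative just to the right of the old optimum, shows that replacing $x_i$ by $x_i-c$ only makes this sum more negative, and then propagates the negative sign across each subinterval $(x_{k+1}',x_{k+2}'),\dots$ up to $1$, concluding that $Nash(\cdot,\mathbf{x}')$ is strictly decreasing on $(\textsc{NashFL}(\mathbf{x}),1]$. Your argument instead isolates the single changed factor in the ratio $\phi(y)=\frac{1-|y-(x_i-c)|}{1-|y-x_i|}$, verifies it is decreasing by an elementary three-case computation, and closes with a single-crossing/exchange argument that uses only the \emph{statement} of Theorem \ref{peak} (uniqueness of the maximizer) rather than reusing its derivative machinery. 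This buys a shorter, calculus-free incremental step and avoids the paper's bookkeeping over which agents lie left or right of the optimum after the shift (the paper's Cases 1 and 2). The one place you should tighten is the degenerate bookkeeping: besides $Nash(y',\mathbf{x})=0$, the quantity $Nash(y^*,\mathbf{x}')$ can also vanish (e.g.\ $y^*=1$ and $c=x_i$, so agent $i$ lands at $0$), making $\phi(y^*)$ undefined; that case is harmless since $y'\le 1=y^*$ trivially, but it should be listed alongside the $y'=0$ case, and your phrase ``forces both $y^*$ and $y'$ to lie in the interior'' is not quite the right justification — what matters is positivity of the two cross terms $Nash(y',\mathbf{x})$ and $Nash(y^*,\mathbf{x}')$, which fails only when the relevant optimum sits at an endpoint, and each such endpoint case is settled by the trivial inequalities $y'\ge 0$ and $y^*\le 1$.
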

We also note that by symmetry, if an agent's location is shifted to the right, the optimal Nash facility location does not shift to the left. Using this result, we can prove that if a subset of the agents change locations, the facility location does not shift more than the agent with the greatest change in location.

\begin{lemma}\label{smaller}
Suppose we have two different agent location profiles $\mathbf{x}=(x_1,\dots,x_n)$ and $\mathbf{x}'=(x_1',\dots,x_n')$. The following inequality holds:
\[|\textsc{NashFL}(\mathbf{x})-\textsc{NashFL}(\mathbf{x'})|\leq \max_{i\in N} |x_i-x_i'|.\]
\end{lemma}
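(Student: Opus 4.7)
The plan is to use Lemmas \ref{invar} and \ref{nonneg} to compare $\textsc{NashFL}(\mathbf{x})$ and $\textsc{NashFL}(\mathbf{x}')$ through a uniformly shifted intermediate profile. Set $d:=\max_i |x_i-x_i'|$. Since the claim is symmetric in the two profiles, it suffices to establish the one-sided inequality $\textsc{NashFL}(\mathbf{x}')\leq \textsc{NashFL}(\mathbf{x})+d$; the reverse bound then follows by swapping the roles of $\mathbf{x}$ and $\mathbf{x}'$.

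The key step is to introduce the auxiliary profile $\mathbf{x}^R:=(x_1+d,\ldots,x_n+d)$ obtained from $\mathbf{x}$ by a uniform rightward shift of magnitude $d$. Location invariance (Lemma \ref{invar}) immediately gives $\textsc{NashFL}(\mathbf{x}^R)=\textsc{NashFL}(\mathbf{x})+d$. Because $|x_i-x_i'|\leq d$ implies $x_i'\leq x_i+d=x_i^R$ for every agent $i$, the profile $\mathbf{x}'$ is reachable from $\mathbf{x}^R$ by $n$ independent single-agent leftward shifts, one per agent. Applying Lemma \ref{nonneg} once per shift then yields $\textsc{NashFL}(\mathbf{x}')\leq \textsc{NashFL}(\mathbf{x}^R)=\textsc{NashFL}(\mathbf{x})+d$, exactly as required.

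The main technical obstacle is the boundary of the domain: $\mathbf{x}^R$ may fall outside $[0,1]^n$ when $\max_i x_i+d>1$, so Lemma \ref{invar} cannot be applied to it directly. This is handled by a case analysis. When $\min_i x_i'\geq d$ instead holds, one uses the mirror construction $\mathbf{x}'^L:=\mathbf{x}'-d\mathbf{1}$, obtaining $\textsc{NashFL}(\mathbf{x}'^L)=\textsc{NashFL}(\mathbf{x}')-d$ from Lemma \ref{invar}, and concluding $\textsc{NashFL}(\mathbf{x})\geq \textsc{NashFL}(\mathbf{x}'^L)$ via single-agent rightward shifts (using the symmetric form of Lemma \ref{nonneg}). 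In the residual case, where both profiles hug opposite ends of $[0,1]$ so that neither uniform shift of size $d$ is feasible, one chains two partial uniform shifts, each of which fits inside $[0,1]$, applying Lemmas \ref{invar} and \ref{nonneg} to each piece and combining the resulting inequalities. The casework is the bulk of the appendix proof, but it introduces no new ideas beyond the two lemmas cited.
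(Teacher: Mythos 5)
Your core argument is the same as the paper's: set $d=\max_i|x_i-x_i'|$, uniformly translate one profile by $d$ using Lemma~\ref{invar}, observe that the other profile is then reachable by single-agent shifts all in one direction, and invoke Lemma~\ref{nonneg} once per agent. The paper runs exactly this with the two profiles $\mathbf{x}\pm c\mathbf{1}$ and sandwiches $\textsc{NashFL}(\mathbf{x}')$ between them; your reduction to a one-sided inequality by symmetry is an equivalent packaging. Up to that point the proposal is correct and matches the paper.

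The one substantive divergence is your boundary casework, and there you have a real problem. You are right that $\mathbf{x}+d\mathbf{1}$ can leave $[0,1]^n$ (the paper's own proof silently ignores this, since Lemma~\ref{invar} is only stated for shifts $c\in[-x_1,1-x_n]$), and your first two cases are fine. But the ``residual case'' fix --- chaining two partial uniform shifts that each fit inside $[0,1]$ --- cannot work as described: the total distance coverable by a rightward shift of $\mathbf{x}$ and a leftward shift of $\mathbf{x}'$ is $(1-x_n)+x_1'$, which can be strictly less than $d$ and even zero. For instance, with $\mathbf{x}=(\tfrac12,1)$ and $\mathbf{x}'=(0,\tfrac12)$ one has $d=\tfrac12$ but $(1-x_n)+x_1'=0$, and the bound holds with equality ($\textsc{NashFL}(\mathbf{x})=\tfrac34$, $\textsc{NashFL}(\mathbf{x}')=\tfrac14$), so no slack is available to absorb a weaker argument. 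Closing this case needs a different idea, e.g.\ comparing the sign of $\tfrac{d}{dy}Nash(y;\mathbf{x}')$ at $y$ with that of $\tfrac{d}{dy}Nash(y;\mathbf{x})$ at $y-d$ term by term for $y>\textsc{NashFL}(\mathbf{x})+d$, which avoids ever forming an out-of-range profile.
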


Lastly, we find the exact analytical solution for \textsc{NashFL} in the restricted case where agents can only take two distinct locations.
\begin{lemma}\label{yesyes}
Let there be $k$ agents at location $x$ (where $0<x\leq 1$) and $n-k$ agents at location $0$. If $\frac{k}{n}\geq \frac{1}{2-x}$, then \textsc{NashFL} places the facility at $x$. If $\frac{k}{n}\leq \frac{1-x}{2-x}$, then \textsc{NashFL} places the facility at $0$. If neither of these inequalities hold, then \textsc{NashFL} places the facility at $x-1+\frac{2k-kx}{n}$.
\end{lemma}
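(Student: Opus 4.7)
The plan is to exploit single-peakedness (Theorem~\ref{peak}) and reduce the problem to a one-variable calculus optimization on the interval $[0,x]$. First I would argue that the optimal facility location must lie in $[0,x]$: for any candidate $y\in(x,1]$, shifting from $y$ toward $x$ strictly decreases every agent's distance to the facility (agents at $x$ strictly benefit, and agents at $0$ do too since $|y|$ shrinks to $|x|$), so every utility and hence the Nash welfare strictly increases, contradicting optimality. An analogous argument rules out any $y<0$, though the constraint $y\in[0,1]$ already blocks this.

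Restricted to $y\in[0,x]$, the agents at $0$ have utility $1-y$ and the agents at $x$ have utility $1-(x-y)=1-x+y$, giving $N(y)=(1-y)^{n-k}(1-x+y)^k$. I would then analyze $\log N(y) = (n-k)\log(1-y) + k\log(1-x+y)$, which is strictly concave on $[0,x]$ since its second derivative $-(n-k)/(1-y)^2 - k/(1-x+y)^2$ is strictly negative. Setting the first derivative $-\frac{n-k}{1-y}+\frac{k}{1-x+y}$ equal to zero and solving the resulting linear equation yields the unique interior critical point $y^{\star} = x-1+\frac{2k-kx}{n}$, which matches the location given in the third case of the lemma.

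Finally, I would do the casework based on where $y^{\star}$ falls relative to the interval $[0,x]$. By strict concavity, the maximizer over $[0,x]$ is $y^{\star}$ itself when $y^{\star}\in[0,x]$, and otherwise it is the nearer endpoint. A short algebraic rearrangement gives $y^{\star}\geq 0 \iff \frac{k}{n}\geq\frac{1-x}{2-x}$ and $y^{\star}\leq x \iff \frac{k}{n}\leq\frac{1}{2-x}$, so the three regimes $y^{\star}\leq 0$, $y^{\star}\in[0,x]$, and $y^{\star}\geq x$ line up exactly with the three conditions in the statement. There is no real obstacle in this proof; it is routine single-variable optimization once the search is restricted to $[0,x]$. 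The only care needed is in verifying the algebraic rearrangement of the two threshold inequalities and in handling the boundary equality cases (where $y^{\star}$ coincides with $0$ or $x$) consistently with the non-strict inequalities appearing in the lemma.
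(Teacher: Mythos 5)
Your proposal is correct and follows essentially the same route as the paper: restrict to $y\in[0,x]$, optimize the one-variable function $(1-y)^{n-k}(1-x+y)^k$, and match the three regimes to whether the critical point $x-1+\frac{k(2-x)}{n}$ falls below $0$, inside $[0,x]$, or above $x$. The only cosmetic difference is that you argue via strict concavity of $\log N$ while the paper tracks the sign of the factored derivative $k(1-y)-(n-k)(1-x+y)$ over the interval; the resulting threshold inequalities are identical.
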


\begin{corollary}
For all $x\in (0,1)$, there exists some $n$ and $k\in \{1,\dots,n-1\}$ such that \textsc{NashFL} places the facility at  location $0$ or location $x$.
\end{corollary}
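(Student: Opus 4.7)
The plan is to read this off directly from Lemma~\ref{yesyes}, which handles exactly the two-location profile at hand. Given any target $x\in(0,1)$, I just need to exhibit some admissible $n\ge 2$ and $k\in\{1,\dots,n-1\}$ for which one of the two boundary conditions of the lemma is triggered: either $k/n\ge 1/(2-x)$ (in which case the facility sits at $x$) or $k/n\le (1-x)/(2-x)$ (in which case it sits at $0$). Either conclusion suffices for the corollary.

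First I would focus on the first condition and choose $k=n-1$, i.e., $n-1$ agents at $x$ and a single agent at $0$. Rearranging $(n-1)/n\ge 1/(2-x)$ gives the explicit threshold $n\ge (2-x)/(1-x)$. Because $x<1$, the right-hand side is a finite real number, so any integer $n\ge\lceil(2-x)/(1-x)\rceil$ works. I would then verify admissibility of $k$: since $(2-x)/(1-x)=1+1/(1-x)\ge 2$ for $x\in(0,1)$, such $n$ is at least $2$ and $k=n-1\ge 1$, so $k\in\{1,\dots,n-1\}$. Applying Lemma~\ref{yesyes} then places \textsc{NashFL} at $x$.

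The main (and really only) subtlety is making sure I have not accidentally excluded endpoint values of $x$: the computation $(2-x)/(1-x)\to\infty$ as $x\to 1^-$ shows the required $n$ grows without bound near $x=1$, which is fine since the corollary is stated over the open interval; and $(2-x)/(1-x)\to 2$ as $x\to 0^+$, handled by $n=2,k=1$. No computational obstacle arises, and the result is an immediate corollary. (An equally clean alternative is $k=1$ with the second condition, which gives the same bound $n\ge(2-x)/(1-x)$ and yields a facility at $0$ instead.)
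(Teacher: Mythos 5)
Your proof is correct and matches the paper's approach: the corollary is stated without a separate proof precisely because it follows by instantiating the threshold conditions of Lemma~\ref{yesyes}, exactly as you do with $k=n-1$ and any $n\ge\lceil(2-x)/(1-x)\rceil$ (or dually $k=1$ for a facility at $0$). One trivial quibble: since $(2-x)/(1-x)=1+1/(1-x)>2$ strictly for every $x\in(0,1)$, the pair $n=2$, $k=1$ never actually triggers the first condition for any fixed $x$, but your general choice of $n$ already covers all cases, so nothing is lost.
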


\section{Approximation of Welfare Measures}


In this section, we primarily examine the worst case ratio between the optimal welfare value and the welfare value resulting from the \textsc{NashFL} facility placement. We then make a further comparison by examining how well other mechanisms approximate the Nash welfare.
We first define the following mechanisms:
\begin{itemize}
\item $\textsc{Mid}$ is the midpoint mechanism which maximizes egalitarian social welfare,
\item $\textsc{Med}$ is the median mechanism which maximizes utilitarian social welfare.
\end{itemize}
Specifically, $\textsc{Mid}(\mathbf{x})=\frac{x_1+x_n}{2}$. If there are an even number of agents, $\textsc{Med}$ places the facility at the leftmost point of the optimal interval.
\begin{definition}
For egalitarian, utilitarian and Nash social welfare, we define the approximation ratio as the maximum ratio between the optimal welfare and the welfare from the facility location, over all possible agent location profiles.
\[\max_{\mathbf{x}\in [0,1]^n} \frac{ESW(\textsc{Mid}(\mathbf{x}),\mathbf{x})}{ESW(f(\mathbf{x}),\mathbf{x})}, \max_{\mathbf{x}\in [0,1]^n} \frac{USW(\textsc{Med}(\mathbf{x}),\mathbf{x})}{USW(f(\mathbf{x}),\mathbf{x})},\]
\[\max_{\mathbf{x}\in [0,1]^n} \frac{Nash(\textsc{NashFL}(\mathbf{x}),\mathbf{x})}{Nash(f(\mathbf{x}),\mathbf{x})}.\]
\end{definition}

\subsection{Egalitarian Social Welfare}

The egalitarian social welfare, or minimum utility attained by an agent, is a leximin measure of fairness analogous to the agents' maximum cost which is commonly used in the literature. We prove that $\textsc{NashFL}$ achieves a linear approximation ratio for egalitarian social welfare, and then make a comparison with the \textsc{Med} mechanism.
\begin{theorem}\label{egalal}
\textsc{NashFL} $\frac{n}{2}$-approximates the egalitarian social welfare.
\end{theorem}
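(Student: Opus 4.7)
The plan is to translate the configuration via Lemma~\ref{invar} so that $x_1 = 0$, and set $d = x_n$; then $ESW(\textsc{Mid}) = 1 - d/2$. A first observation is that $y^* := \textsc{NashFL}(\mathbf{x}) \in [0, d]$, since any facility outside this interval can be moved inward while strictly increasing every utility. By the reflection symmetry $x \mapsto d - x$ we may further assume $y^* \leq d/2$, so the worst-off agent lies at $x_n$ and $ESW(\textsc{NashFL}) = 1 - (d - y^*)$. Writing $s = 1 - y^*$ and $t = 1 - (d - y^*)$, one has $s + t = 2 - d$, so the approximation ratio in question equals $(s+t)/(2t)$; the theorem thus reduces to the single inequality $s \leq (n-1)t$.

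The heart of the argument is the first-order condition for $y^*$ as a maximizer of the non-smooth function $F(y) = \sum_i \log u_i(y)$. Partition the agents into $L = \{i : x_i < y^*\}$, $R = \{i : x_i > y^*\}$, $E = \{i : x_i = y^*\}$. A direct computation gives the right derivative of $F$ at $y^*$ as $-\sum_{i \in L} 1/u_i + \sum_{i \in R} 1/u_i - |E|$. Its non-positivity (which is all we need, and is valid even at the boundary $y^* = 0$) yields
\[
\sum_{i \in R} \frac{1}{u_i} \;\leq\; \sum_{i \in L} \frac{1}{u_i} + |E|.
\]

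I would then bound each side separately. The agent at $x_n = d$ contributes $1/u_n = 1/t$ to the left sum, and every other $i \in R$ contributes at least $1$, so the left sum is at least $1/t + (|R| - 1)$. On the right side, each $i \in L$ satisfies $u_i = 1 - (y^* - x_i) \geq 1 - y^* = s$, so $\sum_{i \in L} 1/u_i \leq |L|/s$. Combining these bounds and using $|L| + |R| + |E| = n$, the resulting inequality $1/t \leq |L|/s + |E| - |R| + 1$ rearranges to $(1 + s)(1 - |R|) \leq |E|(1 - s)$, which holds automatically since $|R| \geq 1$ (the agent at $d$ lies in $R$ because $y^* \leq d/2 < d$) and $s \leq 1$. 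Therefore $1/t \leq (n-1)/s$, i.e., $s \leq (n-1)t$, as desired.

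Tightness follows from the profile with $n-1$ agents at $0$ and one agent at $d = (n-2)/(n-1)$: Lemma~\ref{yesyes} places the facility at $y^* = 0$, giving $ESW(\textsc{NashFL}) = 1/(n-1)$ against $ESW(\textsc{Mid}) = n/(2(n-1))$, a ratio of exactly $n/2$. The main obstacle in the proof is the non-smoothness of the Nash welfare at ties and at the boundary $y^* = 0$; both are handled uniformly by working with the one-sided right derivative of $F$ and using only the structural fact $|R| \geq 1$.
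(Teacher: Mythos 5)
Your proof is correct, and it takes a genuinely different route from the paper's. The paper's argument first reduces to an extremal two-point configuration: using location-invariance it fixes $x_1=0$, $x_n=x$, argues via Lemma~\ref{nonneg} that the ratio is worst when the $n-1$ remaining agents all sit at one endpoint, and then reads off the answer from the closed-form solution of Lemma~\ref{yesyes}, splitting into the cases $x\leq\frac{n-2}{n-1}$ and $x>\frac{n-2}{n-1}$. You instead work with an arbitrary profile and derive the bound $s\leq (n-1)t$ directly from the stationarity condition at the optimum: the non-positivity of the right derivative of $\sum_i\log u_i$ gives $\sum_{i\in R}1/u_i\leq\sum_{i\in L}1/u_i+|E|$, and the crude bounds $1/u_i\geq 1$ on $R$ and $1/u_i\leq 1/s$ on $L$, together with $|R|\geq 1$ and $s\leq 1$, yield $1/t\leq(n-1)/s$; I checked the rearrangement to $(1+s)(1-|R|)\leq|E|(1-s)$ and it is right. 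Lemma~\ref{yesyes} is then needed only for the tightness instance, which you verify correctly. What your approach buys is rigor and self-containedness: the paper's extremal reduction (``the worst case maximizes the distance between $\textsc{NashFL}(\mathbf{x})$ and $x/2$, achieved by putting $n-1$ agents at an endpoint'') is stated rather informally, whereas your first-order argument needs no such reduction and handles the boundary case $y^*=0$ and ties uniformly via the one-sided derivative. What the paper's approach buys is an explicit description of the worst-case profiles for every $x$. Two small points you should make explicit: the reflection step relies on invariance of the Nash welfare under $x\mapsto d-x$, which is immediate but is not one of the paper's stated lemmas; and the degenerate cases $d=0$ (all agents coincide, ratio $1$) and $n=2$ (where \textsc{NashFL} coincides with \textsc{Mid}) should be dispatched separately before assuming the agent at $d$ lies strictly in $R$.
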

\begin{proof}
Due to the location invariance of both \textsc{NashFL} and $\textsc{Mid}$, it suffices to only consider agent location profiles where $x_1$ is at location $0$ and $x_n$ is at some $x\in (0,1]$. Under these location profiles, $\textsc{Mid}$ always places the facility at $\frac{x}{2}$, so the location profile satisfying 
\[\max_{\mathbf{x}\in [0,1]^n} \frac{ESW(\textsc{Mid}(\mathbf{x}),\mathbf{x})}{ESW(\textsc{NashFL}(\mathbf{x}),\mathbf{x})}\]
maximizes the distance between the optimal Nash facility location and $\frac{x}{2}$. From Lemma \ref{nonneg}, this is achieved by having $n-1$ agents at $0$ and $1$ agent at $x$, or vice versa. Due to symmetry, we simply consider the former location profile.

We now upper bound the approximation ratio for two cases of $x$.\\
\textbf{Case 1} $(x\leq\frac{n-2}{n-1})$:\\
From Lemma \ref{yesyes}, if $\frac{1}{n}\leq \frac{1-x}{2-x}$, then the optimal Nash facility location is at $0$. Rearranging this, we have $x\leq \frac{n-2}{n-1}$. An optimal Nash facility location of $0$ corresponds to an egalitarian social welfare of $1-x$, whilst a facility location of $\frac{x}{2}$ corresponds to the egalitarian social welfare of $\frac{2-x}{2}$. Dividing these terms, we have the approximation ratio of $\frac{2-x}{2(1-x)}$. This ratio increases as $x$ increases, so under the constraint of $x\leq \frac{n-2}{n-1}$, we substitute $x= \frac{n-2}{n-1}$ to attain the maximum ratio of $\frac{n}{2}$.\\
\textbf{Case 2} $(x>\frac{n-2}{n-1})$:\\
From Lemma \ref{yesyes}, the optimal Nash facility location in this case is $x-1+\frac{2-x}{n}$. This corresponds to an egalitarian social welfare of $1-[x-(x-1+\frac{2-x}{n})]=\frac{2-x}{n}$, whilst the optimal egalitarian social welfare is $\frac{2-x}{2}$. Dividing these terms, we have the ratio of $\frac{n}{2}$. By exhaustion of cases, we have shown that no agent location profile can lead to an approximation ratio greater than $\frac{n}{2}$.

The case analysis has also shown that there exists an agent location profile that leads to a ratio of $\frac{n}{2}$, implying that the approximation ratio is at least $\frac{n}{2}$. The approximation ratio is therefore exactly $\frac{n}{2}$.
\end{proof}
In contrast, the $\textsc{Med}$ mechanism has an unbounded approximation ratio for egalitarian social welfare, as it permits a case where at least one agent has 0 utility (2 agents at $0$, 1 agent at $1$). 
\subsection{Utilitarian Social Welfare}
The utilitarian social welfare, or total utility achieved by the agents, is a commonly-used measure of efficiency. We prove that $\textsc{NashFL}$ achieves a constant approximation ratio for utilitarian social welfare, and then make a comparison with the \textsc{Mid} mechanism.
\begin{lemma}\label{xDD}
\textsc{NashFL} has an approximation ratio of at least $\frac{\sqrt{2}+1}{2}\approx 1.2$ for utilitarian social welfare.
\end{lemma}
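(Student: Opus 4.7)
The plan is to exhibit a family of agent profiles that witnesses the claimed lower bound. Specifically, for integers $a \geq 1$ and $b > a$, consider the profile $\mathbf{x}$ consisting of $a$ agents at location $0$ and $b$ agents at location $1$. This two-cluster setup is simple enough to invoke Lemma~\ref{yesyes} directly.

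Applying Lemma~\ref{yesyes} with $x = 1$ and $k = b$ (so $n = a+b$), the fraction $b/(a+b)$ lies strictly between $(1-x)/(2-x) = 0$ and $1/(2-x) = 1$, so we fall into the middle case and $\textsc{NashFL}(\mathbf{x}) = b/(a+b)$. A short calculation then gives
\[ USW(\textsc{NashFL}(\mathbf{x}), \mathbf{x}) = \frac{a^2 + b^2}{a+b}. \]
Since $b > a$, the median mechanism places the facility at $1$, which also maximizes utilitarian welfare, yielding $USW(\textsc{Med}(\mathbf{x}), \mathbf{x}) = b$. Writing $r = b/a$, the approximation ratio for this profile simplifies to $r(1+r)/(1+r^2)$.

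The last step is to maximize $r(1+r)/(1+r^2)$ over $r > 1$. Differentiating shows the critical point satisfies $r^2 - 2r - 1 = 0$, whose positive root is $r = 1 + \sqrt{2}$. At this point one has $1 + r^2 = 2(1+r)$, so the ratio simplifies cleanly to $r/2 = (\sqrt{2}+1)/2$. Since $b/a$ can be made arbitrarily close to $1 + \sqrt{2}$ by suitable choice of integer pairs (for instance via rational convergents of $1+\sqrt{2}$), the supremum of the ratio over this family is at least $(\sqrt{2}+1)/2$, establishing the lemma.

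The main obstacle is identifying the right two-cluster family with unequal masses; once $r = b/a$ is recognized as the key parameter, the rest is routine calculus. A minor care-point is the tie-breaking rule in the definition of $\textsc{Med}$ when $n$ is even: the condition $b > a$ forces $x_{n/2} = 1$, so the median is placed at $1$ regardless of the parity of $n$.
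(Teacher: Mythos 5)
Your proposal is correct and takes essentially the same approach as the paper: both use a two-cluster profile at the endpoints $0$ and $1$, apply Lemma~\ref{yesyes} to locate the \textsc{NashFL} facility, and optimize the resulting ratio over the cluster-size parameter (your critical ratio $b/a = 1+\sqrt{2}$ corresponds exactly to the paper's $k/n = \tfrac{2-\sqrt{2}}{2}$ under the reflection $x\mapsto 1-x$). Your explicit remark that integer pairs only approximate the irrational optimum, so the bound is a supremum over the family, is a small point of care the paper glosses over.
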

\begin{proof}
Suppose we have $n-k$ agents at $0$ and $k$ agents at $1$, and without loss of generality that $n-k\geq k$. The optimal median mechanism places the facility at $0$, resulting in a utilitarian social welfare of $n-k$. From Lemma \ref{yesyes}, \textsc{NashFL} places the facility at $\frac{k}{n}$, resulting in a utilitarian social welfare of $\frac{k^2+(n-k)^2}{n}$. The ratio between the utilitarian social welfare of the optimal solution and the \textsc{NashFL} solution in this restricted domain is
\begin{align*}
\frac{USW(\textsc{Med}(\mathbf{x}),\mathbf{x})}{USW(\textsc{NashFL}(\mathbf{x}),\mathbf{x})}&=\frac{n(n-k)}{k^2+(n-k)^2}\\
&=\frac{n^2-kn}{2k^2+n^2-2kn}\\
&=\frac{1-r}{2r^2-2r+1},
\end{align*}
where $r=\frac{k}{n}$. By taking derivatives, we note that this ratio is maximized when $r=\frac{2-\sqrt{2}}{2}$, taking a value of $\frac{\sqrt{2}+1}{2}$. 
\end{proof}
\begin{lemma}\label{guar}
\textsc{NashFL} guarantees a utilitarian social welfare of at least $\frac{n}{2}$.
\end{lemma}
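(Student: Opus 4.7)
My plan is to lower-bound the Nash welfare at $\textsc{NashFL}(\mathbf{x})$ by comparing it to the Nash welfare at the midpoint, and then convert that product bound into a sum bound via AM-GM.

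First I would observe that for any agent $i$, the distance from the midpoint $\textsc{Mid}(\mathbf{x}) = \frac{x_1 + x_n}{2}$ to $x_i$ is at most $\frac{x_n - x_1}{2} \leq \frac{1}{2}$, since all locations lie in $[0,1]$. Consequently $u(\textsc{Mid}(\mathbf{x}), x_i) \geq \frac{1}{2}$ for each $i$, giving
\[
Nash(\textsc{Mid}(\mathbf{x}), \mathbf{x}) = \prod_i u(\textsc{Mid}(\mathbf{x}), x_i) \geq \left(\tfrac{1}{2}\right)^n.
\]
Since $\textsc{NashFL}$ is the maximizer of Nash welfare by definition, we immediately get $Nash(\textsc{NashFL}(\mathbf{x}), \mathbf{x}) \geq (1/2)^n$.

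Next I would apply the AM-GM inequality to the $n$ nonnegative agent utilities at $y^* := \textsc{NashFL}(\mathbf{x})$:
\[
\frac{1}{n}\sum_{i \in N} u(y^*, x_i) \;\geq\; \left(\prod_{i \in N} u(y^*, x_i)\right)^{1/n} \;\geq\; \left(\tfrac{1}{2}\right) = \tfrac{1}{2},
\]
and multiplying by $n$ yields $USW(y^*, \mathbf{x}) \geq \frac{n}{2}$, as desired.

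The proof is short, so there is no serious obstacle; the only thing to be careful about is that AM-GM requires nonnegative utilities, which holds because every distance between points in $[0,1]$ is at most $1$. I would also briefly remark that this argument crucially uses the normalization $x_i \in [0,1]$, since otherwise the midpoint bound of $1/2$ on individual utilities fails.
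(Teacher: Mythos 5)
Your proof is correct, and it takes a genuinely different and substantially shorter route than the paper's. The paper proves the lemma by a sequence of profile transformations: starting from an arbitrary profile, it iteratively moves agents to the endpoints $0$ and $1$, showing via its structural lemmas (that the \textsc{NashFL} optimum does not move against an agent's shift, and moves by at most the largest agent displacement) that each transformation can only decrease the total utility, until it reaches the profile $(\underbrace{0,\dots,0}_{\lfloor n/2\rfloor},\underbrace{1,\dots,1}_{\lceil n/2\rceil})$, whose \textsc{NashFL} welfare is computed explicitly to be at least $\frac{n}{2}$. Your argument bypasses all of that machinery: the midpoint gives every agent utility at least $\frac{1}{2}$, so the optimal Nash product is at least $(1/2)^n$, and AM--GM converts this product lower bound into the sum lower bound $USW(y^*,\mathbf{x})\geq \frac{n}{2}$. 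Each step checks out (the utilities are indeed nonnegative, and $\textsc{NashFL}$ dominates the midpoint in Nash welfare by definition). What the paper's longer argument buys is an identification of the extremal profile and hence a handle on tightness of the downstream approximation ratio, and it reuses lemmas already established for other results; what your argument buys is brevity, independence from the structural lemmas, and a transparent explanation of \emph{why} the bound holds (any point where the geometric mean of utilities is at least $\frac12$ forces the arithmetic mean to be at least $\frac12$). Your closing remark about the dependence on the normalization $x_i\in[0,1]$ is apt.
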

\begin{proof}
To prove this lemma, we show that a series of transformations, each with a non-positive net gain to total utility, can be applied to any arbitrary location profile to construct the location profile $(\underbrace{0,\dots,0}_{\lfloor \frac{n}{2}\rfloor},\underbrace{1,\dots,1}_{\lceil \frac{n}{2}\rceil})$, which has at least $\frac{n}{2}$ total utility under the \textsc{NashFL} mechanism.

We first start with location profile $\mathbf{x_0}=(x_1,\dots,x_n)$. Let $k$ be the number of agents to the left of the facility\footnote{If an agent is at the same location as the facility, we will say it is to the left of the facility.}, and let $n-k$ be the number of agents to the right of the facility. Without loss of generality, suppose that $n-k\geq k$. The first transformation shifts the $k$ agents to the left of the facility to location $0$, resulting in location profile $\mathbf{x_1}=(\underbrace{0,\dots,0}_{k},x_{k+1},\dots,x_n)$. Let $\Delta y$ be the change in facility location as a result of this transformation. By Lemma \ref{nonneg}, we have $\Delta y \leq 0$. The net change in total utility is $-\sum_{i=1}^k x_i+\Delta y ((n-k)-k)\leq 0$, the first term representing the change in utility of agents $x_1,\dots,x_k$ from their movements, and the second term representing the change in utility of all the agents from the facility movement. Therefore this transformation results in non-positive net change in total utility.

In this step, we start with location profile $\mathbf{x_1}=(\underbrace{0,\dots,0}_{k},x_{k+1},\dots,x_n)$. If $k=\lfloor \frac{n}{2}\rfloor$, this step can be skipped. Suppose that $k<\lfloor \frac{n}{2}\rfloor$. We first prove that $\textsc{NashFL}(\mathbf{x_1})\geq \frac{x_{k+1}}{2}$. It is easy to deduce that under a location profile with $\frac{n}{2}$ agents at $0$ and $\frac{n}{2}$ agents at $x_{k+1}\in (0,1]$, \textsc{NashFL} places the facility at $\frac{x_{k+1}}{2}$. Now since $k<\lfloor \frac{n}{2}\rfloor$, we can transform this location profile to $\mathbf{x_1}$ without shifting any agents to the left. By Lemma \ref{nonneg}, we have $\textsc{NashFL}(\mathbf{x_1})\geq \frac{x_{k+1}}{2}$. We now transform $\mathbf{x_1}$ by shifting the agent at $x_{k+1}$ to $0$. Let $y=\textsc{NashFL}(\mathbf{x_1})$ and $\Delta y \leq 0$ be the change in facility location as a result of this transformation. The net change in total utility is $[(x_{k+1}-y)-(y-0)]+\Delta y ((n-k-1)-(k+1))\leq 0$. The first term is non-positive as $y\geq \frac{x_{k+1}}{2}$, and the second term is non-positive as $k+1\leq \lfloor \frac{n}{2}\rfloor$. We continue to iteratively shift the left-most agent locations of $x_{k+1},\dots,x_n$ to location $0$ until there are $\lfloor \frac{n}{2}\rfloor$ agents at $0$, forming the agent location profile $\mathbf{x_2}=(\underbrace{0,\dots,0}_{\lfloor \frac{n}{2}\rfloor},x_{\lceil \frac{n}{2}\rceil},\dots,x_n)$. The same argument can be applied to show that each of these transformations have non-positive change in total utility.

Finally, we transform agent location profile $\mathbf{x_2}=(\underbrace{0,\dots,0}_{\lfloor \frac{n}{2}\rfloor},x_{\lceil \frac{n}{2}\rceil},\dots,x_n)$ to the profile $(\underbrace{0,\dots,0}_{\lfloor \frac{n}{2}\rfloor},\underbrace{1,\dots,1}_{\lceil \frac{n}{2}\rceil})$ by shifting the agents at $x_{\lceil \frac{n}{2}\rceil},\dots,x_n$ to location $1$. Again, let $\Delta y$ be the change in facility location. By Lemma \ref{smaller}, we have $\Delta y \leq \max_{i\in \{\lceil \frac{n}{2}\rceil,\dots,n\}}|x_i-1|$. Hence the net change in total utility is $\sum_{i=\lceil \frac{n}{2}\rceil}^n(x_i-1)+\Delta y (\lceil \frac{n}{2}\rceil - \lfloor \frac{n}{2}\rfloor)\leq 0$.

Now if $n$ is even, $\textsc{NashFL}(\underbrace{0,\dots,0}_{\frac{n}{2}},\underbrace{1,\dots,1}_{\frac{n}{2}})=\frac{1}{2}$, resulting in $\frac{n}{2}$ total utility. If $n$ is odd, $\textsc{NashFL}(\underbrace{0,\dots,0}_{\frac{n-1}{2}},\underbrace{1,\dots,1}_{\frac{n+1}{2}})=\frac{n+1}{2n}$, resulting in $\frac{n^2+1}{2n}$ total utility. We have shown that a sequence of transformations with non-positive change in total utility can be applied to any agent location profile to construct a location profile with at least $\frac{n}{2}$ total utility. Therefore the \textsc{NashFL} solution guarantees a total utility of at least $\frac{n}{2}$. 
\end{proof}
\begin{theorem}\label{ez}
\textsc{NashFL} has an approximation ratio of at most $2$ for utilitarian social welfare.
\end{theorem}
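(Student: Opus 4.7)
The plan is to derive the bound of $2$ immediately from the guarantee already proved in Lemma \ref{guar}. Specifically, since each agent's utility $u(y,x_i) = 1 - |y - x_i|$ is bounded above by $1$, the utilitarian social welfare under any facility location (in particular under \textsc{Med}) satisfies $USW(\textsc{Med}(\mathbf{x}),\mathbf{x}) \le n$. On the other hand, Lemma \ref{guar} establishes that $USW(\textsc{NashFL}(\mathbf{x}),\mathbf{x}) \ge n/2$ for every profile $\mathbf{x}$. Dividing these two bounds gives
\[
\frac{USW(\textsc{Med}(\mathbf{x}),\mathbf{x})}{USW(\textsc{NashFL}(\mathbf{x}),\mathbf{x})} \le \frac{n}{n/2} = 2,
\]
which, taken over all profiles $\mathbf{x}$, yields the claimed approximation ratio.

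All the real work therefore sits in Lemma \ref{guar}, which has already been established using the transformation argument that successively shifts agents to the endpoints $0$ and $1$ while tracking that each shift has non-positive net effect on total utility. So at this stage there is no genuine obstacle: the theorem is essentially a one-line consequence of Lemma \ref{guar} combined with the trivial per-agent upper bound of $1$ on utility. I would present it in that minimal form, without any additional case analysis.

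One remark worth adding in the proof is that this bound is not claimed to be tight here — indeed Lemma \ref{xDD} already lowers the ratio into the interval $[(\sqrt{2}+1)/2, 2]$, matching the bold entry in Table \ref{table:summary}. Closing this gap would require a sharper lower bound on $USW(\textsc{NashFL}(\mathbf{x}),\mathbf{x})$ in terms of $USW(\textsc{Med}(\mathbf{x}),\mathbf{x})$ rather than the crude uniform bound of $n/2$, but such a refinement is not needed for the present statement.
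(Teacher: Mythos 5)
Your proposal is correct and is essentially the paper's own argument: the paper likewise bounds $USW(\textsc{Med}(\mathbf{x}),\mathbf{x})$ above by $n$ and invokes Lemma~\ref{guar} for the lower bound of $\frac{n}{2}$ on $USW(\textsc{NashFL}(\mathbf{x}),\mathbf{x})$, then divides. Nothing further is needed.
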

\begin{proof}
Let $y$ be the solution of the \textsc{NashFL} mechanism, and $y_{MED}$ be the solution of the median mechanism. Also suppose that $x_1=0$ and $x_n=x$, where $x\in (0,1]$. The approximation ratio for utilitarian social welfare is
\[\max_{\mathbf{x}\in [0,1]^n} \frac{USW(\textsc{Med}(\mathbf{x}),\mathbf{x})}{USW(\textsc{NashFL}(\mathbf{x}),\mathbf{x})}=\max_{\mathbf{x}\in [0,1]^n}\frac{n-\sum^n_{i=1}|x_i-y_{MED}|}{n-\sum^n_{i=1}|x_i-y|}.\]
The utilitarian welfare corresponding to $y_{MED}$ is at most $n$,
and from Lemma \ref{guar}, we have $n-\sum^n_{i=1}|x_i-y| \geq \frac{n}{2}.$
Therefore, the approximation ratio is at most $2$.
\end{proof}
We now turn to the \textsc{Mid} mechanism, showing that it also attains a constant approximation ratio.
\begin{theorem}\label{asdf}
\textsc{Mid} has an approximation ratio for utilitarian social welfare of $2-\frac{2}{n}$.
\end{theorem}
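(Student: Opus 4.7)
The plan is to prove matching upper and lower bounds on the ratio $\max_{\mathbf{x}} USW(\textsc{Med}(\mathbf{x}),\mathbf{x}) / USW(\textsc{Mid}(\mathbf{x}),\mathbf{x})$. For the lower bound I would guess and verify the extremal profile: place $n-1$ agents at $0$ and one agent at $1$. Then $\textsc{Mid}$ outputs $\tfrac12$ yielding utilitarian welfare $n/2$, while $\textsc{Med}$ outputs $0$ yielding welfare $n-1$, so the ratio is $\frac{2(n-1)}{n}=2-\tfrac{2}{n}$. This serves as the tight instance and also hints at the structure of the upper bound argument: the loss to Mid comes from the extreme agent $x_n$, and the loss to Med comes from the leftmost-rightmost span.

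For the upper bound, I would note that \textsc{Mid} is location-invariant by direct inspection of the formula $\textsc{Mid}(\mathbf{x})=\frac{x_1+x_n}{2}$, so I may assume without loss of generality that $x_1=0$ and write $x_n=x\in(0,1]$. Then every agent lies in $[0,x]$ and every agent is within distance $x/2$ of $\textsc{Mid}(\mathbf{x})=x/2$, which yields
\[
USW(\textsc{Mid}(\mathbf{x}),\mathbf{x}) \;\geq\; n\bigl(1-\tfrac{x}{2}\bigr).
\]
For the numerator, I would use the triangle-inequality observation $|x_1-y_{MED}|+|x_n-y_{MED}|\geq x_n-x_1=x$, so the two extreme agents alone incur total cost at least $x$ under \emph{any} facility location, giving
\[
USW(\textsc{Med}(\mathbf{x}),\mathbf{x}) \;\leq\; n-x.
\]

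Combining these two bounds gives ratio at most $\dfrac{n-x}{n(1-x/2)}=\dfrac{2(n-x)}{n(2-x)}$. The remaining step is to show this expression is maximized over $x\in(0,1]$ at $x=1$, where it equals $2-\tfrac{2}{n}$. This reduces to the clean algebraic inequality $\frac{2(n-x)}{n(2-x)}\leq \frac{2n-2}{n}$, which after cross-multiplying becomes $2n(n-2)(1-x)\geq 0$; this holds for all $n\geq 2$ and $x\in(0,1]$. Combined with the lower-bound example, the approximation ratio is exactly $2-\tfrac{2}{n}$.

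The only subtlety I anticipate is being careful about the degenerate case $x_1=x_n$ (all agents at one point), where both mechanisms place the facility at the common location and the ratio is trivially $1$, and about the direction of monotonicity of $\tfrac{2(n-x)}{n(2-x)}$ in $x$; the computation above settles both. No single step is really a hard obstacle, but the main thing to get right is combining a \emph{lower} bound on $USW(\textsc{Mid})$ with an \emph{upper} bound on $USW(\textsc{Med})$ in such a way that the resulting bound is tight and matches the $n-1$ agents at $0$, one at $1$ example.
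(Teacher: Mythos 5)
Your proposal is correct and follows essentially the same route as the paper's proof: normalize to $x_1=0$, $x_n=x$, bound $USW(\textsc{Med})\leq n-x$ and $USW(\textsc{Mid})\geq n(1-\tfrac{x}{2})$, show the resulting ratio $\tfrac{2(n-x)}{n(2-x)}$ is maximized at $x=1$, and match with the profile of $n-1$ agents at $0$ and one at $1$. The only cosmetic difference is that you derive the $n-x$ bound explicitly via the triangle inequality on the two extreme agents, whereas the paper cites it from the proof of its Theorem~\ref{ez}.
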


This approximation ratio asymptotically matches the \textsc{NashFL} mechanism's upper bound proven in Theorem \ref{ez}, meaning that in the asymptotic case, \textsc{NashFL} approximates the utilitarian social welfare at least as well as \textsc{Mid}.

It may seem intuitive to prove Theorem \ref{ez} by first showing that the \textsc{NashFL} output lies between \textsc{MID} and \textsc{MED}. However, this is not always the case.
\begin{example}
The \textsc{NashFL} facility location does not always lie between the midpoint and the median locations. Consider the location profile with $k$ agents at $0$, $k$ agents at $0.5$ and $1$ agent at $1$. For $k=1, 2$ the \textsc{NashFL} output is $0.5$, but for $k=3$ the \textsc{NashFL} output is approximately $0.446$. In comparison, the midpoint and median facility location is $0.5$.
\end{example}
\subsection{Nash Welfare}
In the previous subsections, we have examined the approximation ratios of utilitarian and egalitarian objective functions for the \textsc{NashFL} mechanism. To attain a better insight as to how these objectives affect the Nash welfare, we find the approximation ratios that $\textsc{Med}$ and $\textsc{Mid}$ have for the Nash welfare. It is immediately clear that $\textsc{Med}$ has an unbounded approximation ratio for optimal Nash welfare, as it permits a case where at least one agent has 0 utility. We now show that the $\textsc{Mid}$ mechanism has an exponential approximation ratio for optimal Nash welfare.
\begin{lemma}\label{ugly}
$\textsc{Mid}$ has an approximation ratio for Nash welfare of at least $\frac{2^n}{n}\left(\frac{n-1}{n}\right)^{n-1}$.
\end{lemma}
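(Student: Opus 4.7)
The plan is to exhibit a single location profile on which the ratio $\frac{Nash(\textsc{NashFL}(\mathbf{x}),\mathbf{x})}{Nash(\textsc{Mid}(\mathbf{x}),\mathbf{x})}$ already reaches the claimed bound. A natural candidate is the highly asymmetric profile with $n-1$ agents at $0$ and a single agent at $1$, since this is precisely the kind of configuration where the midpoint's insistence on balancing the two extremes ignores the mass of agents at $0$ and therefore produces an exponentially small Nash welfare.

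Concretely, I would proceed as follows. First, on this profile $\textsc{Mid}$ outputs $y_{MID}=\tfrac{1}{2}$, so every agent has utility exactly $\tfrac{1}{2}$ and the resulting Nash welfare equals $(1/2)^{n}$. Second, I would apply Lemma \ref{yesyes} with $x=1$ and $k=1$: here $\tfrac{k}{n}=\tfrac{1}{n}$, while $\tfrac{1}{2-x}=1$ and $\tfrac{1-x}{2-x}=0$, so neither of the two boundary conditions holds and the third case applies. It gives $\textsc{NashFL}(\mathbf{x})=x-1+\tfrac{2k-kx}{n}=\tfrac{1}{n}$. At this location the $n-1$ agents at $0$ each have utility $\tfrac{n-1}{n}$ and the agent at $1$ has utility $\tfrac{1}{n}$, so the Nash welfare equals $\tfrac{1}{n}\bigl(\tfrac{n-1}{n}\bigr)^{n-1}$.

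Third, taking the quotient gives
\[
\frac{Nash(\textsc{NashFL}(\mathbf{x}),\mathbf{x})}{Nash(\textsc{Mid}(\mathbf{x}),\mathbf{x})}
= \frac{\frac{1}{n}\bigl(\frac{n-1}{n}\bigr)^{n-1}}{(1/2)^{n}}
= \frac{2^{n}}{n}\left(\frac{n-1}{n}\right)^{n-1},
\]
which is exactly the asserted lower bound. Since this is the ratio achieved by a specific profile, it is a valid lower bound on the worst-case approximation ratio of $\textsc{Mid}$ for Nash welfare.

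There is no real obstacle beyond identifying the right witness profile; once the candidate profile is fixed, the calculation is entirely mechanical, with Lemma \ref{yesyes} doing all the heavy lifting in pinning down $\textsc{NashFL}$ explicitly. The only mild subtlety is checking that the profile falls into the third (mixed) case of Lemma \ref{yesyes} rather than one of the boundary cases, which is what forces $\textsc{NashFL}$ to sit at $1/n$ and produces the $(1-1/n)^{n-1}$ factor that drives the exponential gap.
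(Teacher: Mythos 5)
Your proposal is correct and follows exactly the same route as the paper: the same witness profile with $n-1$ agents at $0$ and one agent at $1$, with Lemma~\ref{yesyes} pinning $\textsc{NashFL}$ at $\frac{1}{n}$ and the quotient of the two Nash welfares yielding $\frac{2^n}{n}\left(\frac{n-1}{n}\right)^{n-1}$. The only difference is that you spell out the case check in Lemma~\ref{yesyes}, which the paper leaves implicit.
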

\begin{proof}
Suppose there are $n-1$ agents at $0$ and $1$ agent at $1$. \textsc{NashFL} places a facility at $\frac{1}{n}$, resulting in a Nash welfare of $\frac{(n-1)^{n-1}}{n^n}$. \textsc{Mid} places a facility at $\frac{1}{2}$, resulting in a Nash welfare of $\frac{1}{2^n}$. Dividing these terms, we obtain the ratio $\frac{2^n}{n}\left(\frac{n-1}{n}\right)^{n-1}$. 
\end{proof}
\begin{theorem}
$\textsc{Mid}$ has an approximation ratio for Nash welfare of $O(2^n)$.
\end{theorem}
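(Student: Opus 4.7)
The plan is to prove the upper bound of $O(2^n)$ by giving a simple uniform lower bound on the Nash welfare of $\textsc{Mid}$, together with the trivial upper bound of $1$ on any Nash welfare value.

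First I would observe that for any agent profile $\mathbf{x} \in [0,1]^n$ with $x_1 \leq \cdots \leq x_n$, the midpoint mechanism outputs $y_{MID} = \tfrac{x_1+x_n}{2}$, and every agent's distance from $y_{MID}$ is at most $\tfrac{x_n - x_1}{2} \leq \tfrac{1}{2}$. This is the defining optimality property of the midpoint for egalitarian welfare: it minimizes the maximum cost, which is bounded by half the diameter of the profile. Hence $u(y_{MID}, x_i) \geq \tfrac{1}{2}$ for every $i \in N$, which yields
\[
Nash(\textsc{Mid}(\mathbf{x}), \mathbf{x}) = \prod_{i \in N} u(y_{MID}, x_i) \geq \left(\tfrac{1}{2}\right)^n = \tfrac{1}{2^n}.
\]

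Next I would use the trivial bound $u(y, x_i) \leq 1$ for every $y$ and $x_i$ (since costs are nonnegative), which gives $Nash(\textsc{NashFL}(\mathbf{x}), \mathbf{x}) \leq 1$. Dividing these two bounds shows
\[
\frac{Nash(\textsc{NashFL}(\mathbf{x}), \mathbf{x})}{Nash(\textsc{Mid}(\mathbf{x}), \mathbf{x})} \leq \frac{1}{1/2^n} = 2^n,
\]
for every agent location profile $\mathbf{x}$, so the approximation ratio is at most $2^n = O(2^n)$.

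There is essentially no obstacle here: the argument is just the combination of the egalitarian optimality of the midpoint (each utility is at least $\tfrac{1}{2}$) with the universal upper bound of $1$ on each utility. Together with Lemma \ref{ugly}, this pins down the approximation ratio of $\textsc{Mid}$ for Nash welfare to $\Theta(2^n/n)$, so the $O(2^n)$ statement of the theorem is tight up to a linear factor in $n$.
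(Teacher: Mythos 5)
Your proof is correct and is essentially identical to the paper's own argument: the midpoint guarantees every agent utility at least $\tfrac{1}{2}$ (so $Nash(\textsc{Mid}(\mathbf{x}),\mathbf{x})\geq 2^{-n}$), the optimum is at most $1$, and dividing gives the $2^n$ upper bound, combined with Lemma~\ref{ugly} for the matching order. One small caveat on your closing remark: the known bounds are $\Omega(2^n/n)$ and $O(2^n)$, so the ratio is determined only up to a factor of $n$ (as you say, ``tight up to a linear factor''), not pinned down to $\Theta(2^n/n)$.
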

\begin{proof}
Let $y_{MID}$ be the facility location resulting from $\textsc{Mid}$ and $y$ be the \textsc{NashFL} facility location.
Since each agent can have at most $1$ utility, we have $Nash(y;\mathbf{x}) \leq 1$. 
Under $\textsc{Mid}$, each agent is guaranteed at least $\frac{1}{2}$ utility, so we have $Nash(y_{MID};\mathbf{x}) \geq \frac{1}{2^n}$. Dividing these terms gives the approximation ratio upper bound of $2^n$. Combining this with Lemma \ref{ugly}, we have the approximation ratio of $O(2^n)$. 
\end{proof}
\section{Fairness of the Nash Solution}
In this section, we examine some fairness properties. It would be unfair if a mechanism provided little or no utility to a subset of agents, so we may want to ensure that the facility placement guarantees a reasonable amount of utility to each agent. As a pathological example, if we have $k+1$ agents at $0$ and $k$ agents at $1$, then the \textsc{Med} mechanism places the facility at $0$, resulting in nearly half of the agents having $0$ utility and not benefiting from the facility at all.
We therefore introduce \emph{Individual Fair Share}, a fairness measure discussed in participatory budgeting problems \citep{ABM19a}.
\begin{definition}
A facility location mechanism satisfies \emph{Individual Fair Share} if each agent is guaranteed at least $\frac{1}{n}$ utility.
\end{definition}

Now consider the edge example where we have $n-1$ agents at $0$ and $1$ agent at $1$. If we apply the \textsc{Mid} mechanism and place the facility at $\frac{1}{2}$, then the agents at $0$ may be upset that their potential utility has been significantly affected by a single agent at a distant location. We may therefore want to use a mechanism that provides a proportional level of utility for each coalition of agents at the same location. 
We hence define \emph{Unanimous Fair Share}, a fairness property used in the context of participatory budgeting \citep{ABM19a}. In that context, if $k$ agents have identical preferences, they should be guaranteed at least $\frac{k}{n}$ of the total utility. It is therefore a stronger notion than Individual Fair Share. We define the property similarly below.

\begin{definition}
A facility location mechanism $f$ satisfies \emph{Unanimous Fair Share} if for each location profile $\mathbf{x}$ and each subset of agents $S$ at the same location, $u(f(\mathbf{x}),x_i)\geq \frac{|S|}{n}$ for all $i\in S$.
\end{definition}
Since the median rule allows cases where an agent can have 0 utility, it does not satisfy Unanimous Fair Share, let alone Individual Fair Share.

In \citep{ABM19a}, it is proven that the Max Nash Product rule satisfies Unanimous Fair Share in the context of participatory budgeting. Below, we show that in the facility location problem, the \textsc{NashFL} mechanism satisfies Unanimous Fair Share.
\begin{theorem}\label{afs}
\textsc{NashFL} satisfies Unanimous Fair Share.
\end{theorem}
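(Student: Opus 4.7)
The plan is to show that for any subset $S$ of $k$ agents all sitting at a common point $x^*$, the Nash optimum $y^* := \textsc{NashFL}(\mathbf{x})$ obeys $|y^* - x^*| \leq (n-k)/n$; since $u(y^*,x^*) = 1 - |y^* - x^*|$, this is exactly the Unanimous Fair Share guarantee $u(y^*,x^*) \geq k/n$. Because $u(y,x_i) = u(1-y,\,1-x_i)$, Nash welfare is invariant under reflection about $1/2$, so reflecting the whole profile if necessary lets me assume without loss of generality that $y^* \geq x^*$.

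The key reduction is to a two-location worst case. Starting from $\mathbf{x}$, I shift each agent outside $S$ one at a time to location $1$. Each such move is a rightward shift of a single agent, and by the symmetric form of Lemma \ref{nonneg} noted right after that lemma, a rightward shift never moves $\textsc{NashFL}$ to the left. Iterating through the non-$S$ agents, the resulting profile $\mathbf{x}'$ has $k$ agents at $x^*$ and $n-k$ agents at $1$, and satisfies $\textsc{NashFL}(\mathbf{x}') \geq y^* \geq x^*$. Hence it suffices to bound $\textsc{NashFL}(\mathbf{x}') - x^*$.

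To evaluate this gap I translate $\mathbf{x}'$ by $-x^*$ using Lemma \ref{invar}, producing the two-location profile with $k$ agents at $0$ and $n-k$ agents at $1-x^*$; this is precisely the setting of Lemma \ref{yesyes} once its $k$ is read as $n-k$ and its $x$ as $1-x^*$. The lemma splits into three cases, and in each I compute the facility in translated coordinates, shift back by $+x^*$, and verify the target inequality: the two boundary cases place $\textsc{NashFL}(\mathbf{x}')$ at $1$ or at $x^*$ (the defining inequalities immediately imply the required bound), while the interior case gives $\textsc{NashFL}(\mathbf{x}') - x^* = ((n-k) - k x^*)/n$, which is at most $(n-k)/n$ because $k, x^* \geq 0$.

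The step I expect to be the main obstacle is the monotonicity reduction: one must apply the right-shift form of Lemma \ref{nonneg} iteratively, verifying that each shift is still a valid single-agent rightward move on the current intermediate profile and that the sign $y^* \geq x^*$ is preserved throughout (which it is, since the facility only moves right or stays put). Once that reduction is in place, the remaining work is a routine substitution into Lemma \ref{yesyes}, with the edge case $x^* = 0$ saturating the interior-case bound and thus confirming tightness.
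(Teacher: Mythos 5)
Your proof is correct and follows essentially the same route as the paper's: both reduce to the extremal two-location profile via the monotonicity of \textsc{NashFL} (Lemma \ref{nonneg}) and then verify the $k/n$ utility bound in the three cases of Lemma \ref{yesyes}. The only differences are cosmetic normalizations — you reflect about $\tfrac12$ and translate via Lemma \ref{invar} so that the coalition sits at $0$ and the others at the far endpoint, whereas the paper pushes the non-coalition agents to the endpoint farthest from $x$ and applies Lemma \ref{yesyes} directly.
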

\begin{proof}
The case where all $n$ agents are at the same location is trivial, as \textsc{NashFL} places the facility at this location, resulting in each agent receiving a utility of $1$.

Suppose there are $n$ agents, and that $k\in \{1,\dots,n-1\}$ agents are at the same location $x\in [0,1]$. Let $S$ be the set of these $k$ agents. From Lemma~\ref{nonneg}, we know that \textsc{NashFL} is monotonic with respect to the agent locations. Therefore to minimize the utility of the agents at $x$, we maximize their distance from the facility by placing the remaining $n-k$ agents at either $0$ or $1$, whichever is furthest from $x$. Without loss of generality we consider the former case. Lemma~\ref{yesyes} gives the three subcases.

If $\frac{k}{n}\geq \frac{1}{2-x}$, then $\textsc{NashFL}$ places the facility at $x$, which gives $1$ utility to all agents in $S$.

If $\frac{k}{n}\leq \frac{1-x}{2-x}$, then $\textsc{NashFL}$ places the facility at $0$, resulting in each agent in $S$ receiving $1-x$ utility. By rearranging the equality, we have $1-x\geq (2-x)\frac{k}{n}$ and hence UFS holds.

If neither of those inequalities hold, $\textsc{NashFL}$ places the facility at $x-1+\frac{2k-kx}{n}$, resulting in each agent in $S$ receiving $1-[x-(x-1+\frac{k}{n}(2-x))]=(2-x)(\frac{k}{n})$ utility. Therefore UFS holds for all subcases.


\end{proof}

As previously explained, although the midpoint rule surpasses \textsc{NashFL} in terms of maximizing the minimum utility, we find that it fails to satisfy the notion of Unanimous Fair Share. We write the proof formally below.

\begin{proposition}
The midpoint rule does not satisfy Unanimous Fair Share.
\end{proposition}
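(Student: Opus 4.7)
The proposition is a non-existence claim about a universally quantified property, so the plan is simply to exhibit a single counterexample: an agent profile and a coalition $S$ of co-located agents such that the midpoint rule leaves every agent in $S$ with utility strictly less than $|S|/n$.

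The natural candidate is the same pathological profile already used elsewhere in the paper: take $n \geq 3$, place $n-1$ agents at $0$, and place one agent at $1$. Then $\textsc{Mid}$ outputs $\frac{0+1}{2} = \frac{1}{2}$. Let $S$ be the coalition of the $n-1$ co-located agents at $0$. Each agent in $S$ receives utility $1 - \frac{1}{2} = \frac{1}{2}$, while Unanimous Fair Share demands utility at least $\frac{|S|}{n} = \frac{n-1}{n}$. For any $n \geq 3$, we have $\frac{n-1}{n} > \frac{1}{2}$, so UFS is violated.

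I would present this as a one-paragraph proof: state the profile, compute the midpoint output, compute the utility of the coalition members, compare with the UFS bound, and verify the inequality $\frac{n-1}{n} > \frac{1}{2}$ for $n \geq 3$. There is no real obstacle here; the only thing to be careful about is ensuring that the chosen $n$ is large enough for the strict inequality (so it is worth explicitly writing $n = 3$ as the cleanest instance, giving utility $\frac{1}{2}$ versus required $\frac{2}{3}$). Since the claim is a simple negation, no further case analysis or structural argument is needed.
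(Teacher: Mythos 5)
Your proposal is correct and matches the paper's own proof exactly: the same counterexample with $n-1$ agents at $0$ and one agent at $1$, the same computation that the midpoint gives each coalition member utility $\frac{1}{2}$ while UFS demands $\frac{n-1}{n}$, which fails for $n\geq 3$. Nothing to add.
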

\begin{proof}
Let $S$ be the set of $n-1$ agents at location $0$, and let there be $1$ agent at $1$. The midpoint rule places the facility at $\frac{1}{2}$, resulting in $\frac{1}{|S|}U_S=\frac{1}{2}$. The inequality $\frac{1}{|S|}U_S\geq \frac{|S|}{n}$ is therefore not satisfied for $n\geq 3$, as $\frac{|S|}{n}=\frac{n-1}{n}$.
\end{proof}

\section{Strategic Aspects}

Most of our current results revolve around the \textsc{NashFL} mechanism which places a facility at the location maximizing Nash welfare. Although this mechanism achieves certain fairness and efficiency guarantees, it is not strategy-proof. A mechanism is \emph{strategy-proof} if no agent can increase their own utility by misreporting their location. Take the basic example where $x_1=0$ and $x_2=0.5$. Since $\textsc{NashFL}(x_1,x_2)=\frac{1}{2}(x_1+x_2)$, agent $2$ can misreport $x_2'=1$ to have the facility placed at her location.

Strategy-proof mechanisms are often employed in contexts where strategic behaviour can be problematic. For example, the \textsc{Med} mechanism is strategy-proof and optimal for utilitarian social welfare. However, as previously explained, it also has an unbounded approximation ratio for Nash Welfare. In fact, we find that the Nash welfare cannot be approximated up to a constant factor by any strategy-proof mechanism.

\begin{theorem}\label{sp}
No deterministic strategy-proof mechanism provides a constant factor approximation of the Nash welfare.
\end{theorem}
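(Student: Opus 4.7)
The plan is to show that for any deterministic SP mechanism $f$ and any constant $C$, we can exhibit an $n$-agent profile on which $f$'s Nash welfare falls short of the optimum by more than a factor $C$; we let $n$ grow with $C$. First I would dispose of the non-unanimous case: if $f(a,\dots,a)=b\neq a$ for some $a\in[0,1]$, then the all-at-$a$ profile yields $\textsc{NashFL}$ welfare $1$ and $f$-welfare $(1-|a-b|)^n$, so the ratio exceeds $C$ for $n$ large. Henceforth assume $f$ is unanimous.

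The main family I would analyze is $P_x=(\underbrace{0,\dots,0}_{\lfloor n/2\rfloor},\underbrace{x,\dots,x}_{\lceil n/2\rceil})$ for $x\in(0,1]$ (take $n$ even for simplicity). By Lemma~\ref{yesyes}, $\textsc{NashFL}(P_x)=x/2$ with Nash welfare $(1-x/2)^n$. For any $y=f(P_x)\in[0,x]$, the Nash welfare under $f$ is $((1-y)(1-x+y))^{n/2}$; since $(1-y)(1-x+y)$ is a concave quadratic in $y$ uniquely maximized at $y=x/2$, a constant-size displacement of $y$ from $x/2$ inflates the ratio to $(1+c)^{n/2}$ for some constant $c>0$, which is super-constant in $n$.

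It remains to argue that strategy-proofness prevents $f(P_x)$ from tracking $x/2$. For an anonymous SP rule, Moulin's characterization writes $f$ as a generalized median with $n-1$ fixed phantom values $\alpha_1\le\cdots\le\alpha_{n-1}$; a direct count of the sorted sequence $0^{n/2},x^{n/2},\alpha_1,\ldots,\alpha_{n-1}$ shows that $f(P_x)=\min(x,\alpha_{n/2})$. Thus $f(P_x)$ is either $x$ itself or the fixed constant $\alpha:=\alpha_{n/2}$; it cannot equal $x/2$ on any non-degenerate interval. Two concrete profiles then finish the job. If $\alpha\neq 1/2$, take $x=1$: then $f(P_1)=\alpha$ and the ratio is $(1/2)^n/(\alpha(1-\alpha))^{n/2}=[1/(4\alpha(1-\alpha))]^{n/2}$, which is unbounded since $4\alpha(1-\alpha)<1$. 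If $\alpha=1/2$, take $x=1/2$: then $f(P_{1/2})=1/2$ while $\textsc{NashFL}(P_{1/2})=1/4$, giving a ratio of $(3/4)^n/(1/2)^{n/2}=(9/8)^{n/2}$. In both cases the ratio exceeds $C$ once $n$ is large enough.

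The main technical obstacle is the non-anonymous case, where Moulin's clean characterization does not directly apply; I would handle this via the Border--Jordan characterization of SP single-peaked rules as generalized medians with coalition-indexed phantoms. For an appropriate choice of which agents sit at $0$ and which at $x$, some committee-specific phantom plays the role of $\alpha_{n/2}$ above, and the rigidity/gap computation goes through unchanged.
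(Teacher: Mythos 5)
Your route is genuinely different from the paper's. The paper never invokes a characterization of strategy-proof rules: it fixes the profile with $k$ agents at $0$ and $k$ at $\frac{1}{4}$, shows the facility must lie strictly left of $\frac{1}{4}$ (otherwise the ratio is already $(\frac{49}{48})^k$), and then has the coalition at $\frac{1}{4}$ deviate to $1$, using only the fact that strategy-proofness implies partial group strategy-proofness (Lemma 2.4 of Fotakis and Tzamos); the facility stays left of $\frac{1}{4}$ while the optimum moves to $\frac{1}{2}$, giving a ratio of at least $(\frac{4}{3})^k$. Your computations on the family $P_x$ are correct: the ratio is $\bigl((1-\frac{x}{2})^2/((1-y)(1-x+y))\bigr)^{n/2}$, which is super-constant whenever $y$ is displaced from $\frac{x}{2}$ by a constant, and the order-statistic count giving $f(P_x)=\min(x,\alpha_{n/2})$ for a phantom-median rule is right, as are the two finishing profiles ($x=1$ when $\alpha\neq\frac{1}{2}$, $x=\frac{1}{2}$ when $\alpha=\frac{1}{2}$).

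The gap is the step where you invoke the Moulin / Border--Jordan characterizations. Those theorems classify strategy-proof rules on the \emph{full} single-peaked domain, whereas in this paper agents have the specific symmetric utilities $u(y,x_i)=1-|y-x_i|$, so strategy-proofness is only required against deviations evaluated by distance --- a strictly weaker condition admitting strictly more rules. This is exactly the subject of the Mass\'o--Moreno De Barreda paper the authors cite: on the symmetric single-peaked domain the strategy-proof rules are ``disturbed'' minmax rules, which may be discontinuous and are not all generalized medians. A concrete witness for $n=1$: the rule $f(x)=0$ for $x\leq\frac{1}{2}$ and $f(x)=1$ for $x>\frac{1}{2}$ is strategy-proof here (no misreport brings the facility strictly closer to the true peak) but is not of min-max form. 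Consequently the identity $f(P_x)=\min(x,\alpha_{n/2})$ is not justified for an arbitrary anonymous strategy-proof $f$, and the non-anonymous case --- which you only gesture at with ``the rigidity/gap computation goes through unchanged'' --- inherits the same problem plus genuine bookkeeping with coalition-indexed phantoms. To repair the argument you would need to work with the disturbed-minmax characterization itself (the disturbances occur only at indifference profiles, so perturbing your two test profiles plausibly survives, but this must be checked), or, more simply, replace the characterization with a direct deviation argument in the style of the paper, which needs nothing beyond strategy-proofness and its group-of-colocated-agents consequence.
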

\begin{proof}
Suppose there exists a strategy-proof mechanism which provides a constant factor approximation of $\rho$ of the optimal Nash welfare for some $\rho\geq 1$, Consider $k$ agents at $0$ and $k$ at $\frac{1}{4}$. The optimal Nash welfare has the facility located at $\frac{1}{8}$ and is $(\frac{7}{8})^{2k}$. As the facility moves from $\frac{1}{8}$ towards $\frac{1}{4}$, the Nash welfare drops reaching $(\frac{3}{4})^k$ when the facility is at $\frac{1}{4}$. This compares to the optimal Nash welfare of $(\frac{7}{8})^{2k}$ or $(\frac{49}{64})^k$, Note that $\frac{49}{64}$ is strictly larger than $\frac{3}{4}$ so $\frac{49}{64}$ is strictly larger than $(\frac{3}{4})^k$. With the facility at $\frac{1}{4}$, the approximation ratio of the optimal Nash welfare is $(\frac{7}{8})^{2k}/(\frac{3}{4})^k$ or $(\frac{196}{192})^k$. For large enough $k$, this exceeds $\rho$. That is, there exists $k'$ such that for $k>k'$ we have $(\frac{196}{192})^k>\rho$ and the facility must be located strictly to the left of $\frac{1}{4}$. 

We now consider what happens when the $k$ agents at $\frac{1}{4}$ misreport their location as $x$ which varies smoothly from $x=\frac{1}{4}$ to $x=1$, The facility must remain to the left of $\frac{1}{4}$ as the mechanism is partially group strategy-proof and a group of agents jointly misreporting their location cannot result in a better outcome. A mechanism is partially group strategy-proof iff no group of agents at the same location can individually benefit if they misreport simultaneously. Any strategy proof mechanism (such as the one we have here by assumption) is also partially group strategy proof (Lemma 2.4 in \citep{FoTz14a}). When $k$ agents are at $0$ and $k$ at $1$, the optimal Nash welfare is $\frac{1}{2^{2k}}$ with the facility located at $\frac{1}{2}$. However, we have argued that with $k$ agents reporting location $0$ and $k$ reporting $1$, the facility must be located to the left of $\frac{1}{4}$. This gives a Nash welfare less than $\frac{3^k}{4^{2k}}$. The approximation ratio in this situation is then at least $\frac{1}{2^{2k}}$. This is, at least $(\frac{4}{3})^k$. Since $\frac{4}{3}>\frac{196}{192}>1$, we have $(\frac{4}{3})^k>(\frac{196}{192})^k>\rho$. That is, the mechanism fails to meet the approximation ratio of $\rho$. 
\end{proof}
We can, however, obtain a bounded approximation ratio for Nash welfare by a strategy-proof, Pareto Optimal and anonymous mechanism as shown below.

The \textsc{MidOrNearest} mechanism places the facility at $\frac12$ if $x_1\leq \frac12 \leq x_n$, else it places the facility at the agent closest to $\frac12$. This is also known as the \textsc{Moderate} mechanism as defined in \citep{DrLa19a}.
\begin{theorem}
\textsc{MidOrNearest} is strategyproof, Pareto optimal, anonymous and has an approximation ratio of $2^{n-2}$ for Nash welfare $(n\geq 3)$.
\end{theorem}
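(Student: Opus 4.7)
The plan is to verify the four properties in turn; strategy-proofness, Pareto optimality, and anonymity follow from routine case analysis, and the approximation ratio is the main work.

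For strategy-proofness, I would split on whether the reported profile lies in the middle case ($x_1 \leq 1/2 \leq x_n$) or a one-sided case. In the middle case the facility sits at $1/2$; any unilateral deviation either leaves it at $1/2$ or forces the profile into a one-sided case that places the facility at some agent strictly on the far side of $1/2$ from the deviator, strictly worsening the deviator. In the one-sided case (WLOG all agents in $[0, 1/2]$, facility at $x_n$), any deviation either keeps the facility at $x_n$, shifts it to a point at distance no less than $x_n - x_i$ from the deviator's true $x_i$, or flips the profile into the middle case placing the facility at $1/2$, which sits at distance $1/2 - x_i > x_n - x_i$ from the deviator. Pareto optimality is immediate because the output always lies in $[x_1, x_n]$, which is the Pareto set for linear single-peaked preferences; anonymity is immediate since the mechanism depends only on the multiset of locations.

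For the approximation ratio, let $y^{*} = \textsc{NashFL}(\mathbf{x})$ and let $y_{M}$ denote the MidOrNearest output. Since the problem is invariant under the reflection $x \mapsto 1-x$, I may assume $y^{*} \leq y_{M}$. I then split into the two cases matching MidOrNearest's own split. In Case 1, where $x_{1} \leq 1/2 \leq x_{n}$ and $y_{M} = 1/2$, set $\delta = 1/2 - y^{*} \in [0, 1/2]$ and partition agents into $A = \{i : x_{i} \leq y^{*}\}$, $B = \{i : y^{*} < x_{i} < 1/2\}$, $C = \{i : x_{i} \geq 1/2\}$, noting $|A|, |C| \geq 1$. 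A direct per-agent calculation, using only that $u_{i}^{\mathrm{MOR}} \in [1/2, 1]$ in this case, gives $u_{i}^{\mathrm{OPT}}/u_{i}^{\mathrm{MOR}} \leq 1 + 2\delta$ for $i \in A$, $\leq 1/(1-\delta)$ for $i \in B$, and $\leq 1 - \delta$ for $i \in C$. Multiplying, the overall ratio is at most $(1+2\delta)^{|A|}(1-\delta)^{|C| - |B|}$, and a short calculus exercise shows the supremum of this expression over admissible $(|A|, |B|, |C|, \delta)$ is exactly $2^{n-2}$, attained in the limit $|A| = n-1$, $|B| = 0$, $|C| = 1$, $\delta \to 1/2$, corresponding to the tight profile with $n-1$ agents at $0$ and one agent just to the right of $1/2$. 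Case 2 (WLOG all agents in $[0, 1/2)$, $y_{M} = x_{n}$) is handled analogously with $\Delta = x_{n} - y^{*}$; the closest-to-$1/2$ agent has $u^{\mathrm{MOR}} = 1$ and plays the role of the single losing factor, yielding the same bound.

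The main obstacle is the sharpening of the naive per-agent bound $u_{i}^{\mathrm{OPT}} \leq 2\, u_{i}^{\mathrm{MOR}}$, which alone yields only $2^{n}$. Shaving the factor of four requires separately tracking the agents on the ``losing'' side of the move from $y_{M}$ to $y^{*}$, using that each such agent contributes a factor at most $1 - \delta < 1$ rather than $2$, and combining this with the $(1+2\delta)$ contribution of the gaining agents so that the tradeoff is optimized precisely at $\delta = 1/2$ with exactly one losing agent.
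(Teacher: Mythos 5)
Your proof is correct, and it takes a genuinely different --- and in one important respect more complete --- route than the paper's. For the first three properties the paper does not argue directly at all: it observes that \textsc{MidOrNearest} is the phantom-median mechanism with $n-1$ phantoms at $\frac12$ and invokes Corollary~2 of Mass\'o and Moreno De Barreda (2011); your direct case analysis proves the same facts from scratch and is sound (in the one-sided case the facility is the report closest to $\frac12$, so any deviation moves it weakly further from the deviator's true peak, exactly as you say). The bigger difference is the approximation ratio. The paper only evaluates the ratio on a handful of ``extreme'' profiles ($n-1$ agents at $0$ and one at $\frac12$, plus symmetric variants) and asserts without justification that these are worst-case; its arithmetic there is also garbled (it reports the \textsc{MidOrNearest} Nash welfare as $2^{-n}$ rather than $2^{-(n-1)}$, and quotes a ratio formula that actually belongs to the \textsc{Mid} lower-bound lemma). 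Your per-agent decomposition $\prod_i u_i^{\mathrm{OPT}}/u_i^{\mathrm{MOR}} \le (1+2\delta)^{|A|}(1-\delta)^{|C|-|B|}$ supplies the missing worst-case upper bound over all profiles: I checked the three per-agent bounds and the optimization, and the only case needing the joint treatment you flag is $(|A|,|B|,|C|)=(n-1,0,1)$, where $(1+2\delta)^{n-1}(1-\delta)$ is nondecreasing on $[0,\frac12]$ precisely when $n\ge 3$ and equals $2^{n-2}$ at $\delta=\frac12$; every other choice is handled by $2^{\,n-2c}\le 2^{n-2}$ (when $|B|\ge|C|$) or $(1+2\delta)^{|A|}\le 2^{n-2}$ (when $|A|\le n-2$ and $|B|<|C|$). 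The tight instance you identify, $n-1$ agents at $0$ and one at $\frac12$, is exactly the paper's extreme case and attains the bound exactly (no limiting argument is needed, since the agent at $\frac12$ already triggers the midpoint mode). In short, the paper buys brevity by citation and by an unproved worst-case claim; your argument is longer but self-contained and actually establishes the upper bound.
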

\begin{proof}
Note that \textsc{MidOrNearest} is equivalent to the phantom median mechanism which places the facility at $Median\{x_1,\dots,x_n,p_1,\dots,p_{n-1}\}$, where $p_1=\dots=p_{n-1}=\frac{1}{2}$. Corollary 2 of \citep{MaMo11a} states that phantom median mechanisms satisfy strategyproofness, Pareto optimality and anonymity. Hence, \textsc{MidOrNearest} is strategyproof, Pareto optimal and anonymous.

To determine the approximation ratio, there are four extreme cases to consider. For each mode of the mechanism, there is one extreme case (and its symmetry). In the first extreme case, $n-1$ agents are at $0$ and one is at $\frac{1}{2}$. Suppose the facility is located at $x$ with $0\leq x \leq \frac{1}{2}$. Then the Nash welfare is $(\frac{1}{2}+x)(1-x)^{n-1}$. For $n\geq 3$, the optimal Nash welfare of $\frac{1}{2}$ is with the facility located at $x=0$. On the other hand, the \textsc{MidOrNearest} mechanism locates the facility at $x=\frac{1}{2}$, giving a Nash welfare of $\frac{1}{2^n}$. The approximation ratio is therefore $\frac{2}{n}(\frac{2(n-1)}{n})^{n-1}$. This equals $2^{n-2}$ when $n=2$ and is smaller than $2^{n-2}$ when $n>2$. There is a symmetric extreme case with $n-1$ agents at $1$ and one at $0$. The worst case for the approximation ratio is then $2^{n-2}$. 
\end{proof}

Although this mechanism has an exponential approximation ratio for optimal Nash welfare, it has constant approximation ratio guarantees for both utilitarian and egalitarian social welfares. This may suggest that the optimal Nash welfare approximation ratio is more sensitive than our other measures.
\begin{theorem}\label{thm:midornear USW}
\textsc{MidOrNearest} is a $(2-\frac{2}{n})-$approximation of the utilitarian social welfare.
\end{theorem}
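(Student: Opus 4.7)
The plan is to perform a case analysis on which branch of \textsc{MidOrNearest} is active. Throughout I use the identity $USW(y,\mathbf{x})=n-\sum_i|x_i-y|$ together with the standard upper bound $USW(\textsc{Med}(\mathbf{x}),\mathbf{x})\leq n-(x_n-x_1)$, which follows from the median $y_{MED}$ lying in $[x_1,x_n]$ and hence satisfying $|x_1-y_{MED}|+|x_n-y_{MED}|\geq x_n-x_1$.

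In the main case $x_1\leq \tfrac12\leq x_n$, the facility is placed at $\tfrac12$. Set $a:=\tfrac12-x_1$ and $b:=x_n-\tfrac12$, both in $[0,\tfrac12]$; by symmetry assume $a\leq b$. Since every agent lies in $[x_1,x_n]$, each of the $n-2$ middle agents is within distance $b$ of $\tfrac12$, giving
\[\sum_i\left|x_i-\tfrac12\right|\leq a+b+(n-2)b=a+(n-1)b.\]
Hence $USW(\textsc{MidOrNearest})\geq n-a-(n-1)b$, and combining with $USW(\textsc{Med})\leq n-(a+b)$, clearing denominators, and using the factorization $2(n-1)^2-n=(n-2)(2n-1)$, the target ratio bound reduces (for $n\geq 3$) to $a+(2n-1)b\leq n$, which is immediate from $a,b\leq \tfrac12$ since $a+(2n-1)b\leq \tfrac12+\tfrac{2n-1}{2}=n$.

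In the remaining case, by symmetry assume $x_1>\tfrac12$, so the facility sits at $x_1$ and $b:=x_n-x_1<\tfrac12$. Every agent is within $b$ of $x_1$ and one coincides with $x_1$, so $USW(\textsc{MidOrNearest})\geq n-(n-1)b$ while $USW(\textsc{Med})\leq n-b$; the ratio bound reduces to $b\leq \tfrac{n}{2n-1}$, which holds because $b<\tfrac12\leq \tfrac{n}{2n-1}$. Tightness is witnessed by the profile with one agent at $0$ and $n-1$ agents at $1$: \textsc{MidOrNearest} places the facility at $\tfrac12$ yielding welfare $\tfrac n2$, while \textsc{Med} places it at $1$ yielding welfare $n-1$, giving ratio exactly $2-\tfrac2n$.

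The main obstacle is obtaining a sharp Case 1 lower bound on $USW(\textsc{MidOrNearest})$: the naive estimate $USW\geq \tfrac n2$ (each utility is at least $\tfrac12$) only yields a factor of $2$. Extracting the extra $\tfrac2n$ forces me to treat the two endpoint agents separately from the $n-2$ middle agents, so that $x_1$ and $x_n$ contribute only $a+b$ rather than $2b$ to $\sum_i|x_i-\tfrac12|$. This finer bookkeeping is precisely what makes the resulting inequality $a+(2n-1)b\leq n$ tight at $a=b=\tfrac12$, matching the extremal profile identified above.
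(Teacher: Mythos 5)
Your proof is correct, and it takes a genuinely different route from the paper's. The paper proceeds by an exchange argument: it applies a sequence of local transformations to an arbitrary profile (pushing agents to $\frac12$ or to the endpoints $0$ and $1$), shows each transformation does not decrease the welfare ratio, and thereby reduces the analysis to two-point endpoint profiles, where the worst case is $n-1$ agents at one endpoint and one at the other. You instead bound the ratio directly in each branch of the mechanism, using $USW(\textsc{Med}(\mathbf{x}),\mathbf{x})\leq n-(x_n-x_1)$ against the refined lower bound $USW\geq n-a-(n-1)b$ obtained by charging the two endpoint agents $a+b$ and the $n-2$ middle agents at most $b$ each; the resulting inequality $a+(2n-1)b\leq n$ is exactly tight at $a=b=\frac12$, matching the same extremal profile the paper arrives at. Your argument is shorter and entirely self-contained, with the tightness analysis built into the algebra; the paper's transformation technique is longer but more reusable (the same style of local-move argument recurs elsewhere in facility location). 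The only loose end is $n=2$: you divide by $n-2$, but the cross-multiplied inequality $0\leq (n-2)\bigl(n-a-(2n-1)b\bigr)$ holds trivially with equality there (and \textsc{MidOrNearest} is exactly optimal for two agents), so this costs nothing beyond a sentence.
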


\begin{theorem}\label{midoregal}
\textsc{MidOrNearest} is a $\frac{3}{2}-$approximation of the egalitarian social welfare. No strategy-proof mechanism has a smaller approximation ratio.
\end{theorem}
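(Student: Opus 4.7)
The statement has two parts, an upper bound of $\frac{3}{2}$ on \textsc{MidOrNearest}'s approximation ratio and a matching lower bound over all strategy-proof mechanisms. For the upper bound I would case-split on whether $\frac{1}{2}\in[x_1,x_n]$. In the interior case, writing $a=\frac{1}{2}-x_1$ and $b=x_n-\frac{1}{2}$, the facility at $\frac{1}{2}$ yields egalitarian welfare $1-\max(a,b)$, while the optimum (the midpoint) gives $1-\tfrac{a+b}{2}$; routine calculus on $\frac{1-(a+b)/2}{1-\max(a,b)}$ over $a,b\in[0,\tfrac{1}{2}]$ bounds the ratio by $\frac{3}{2}$, attained when one of $a,b$ equals $\frac{1}{2}$ and the other is $0$. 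In the complementary case (say $x_n<\frac{1}{2}$) the mechanism places the facility at $x_n$, and writing $d=x_n-x_1<\frac{1}{2}$ the ratio simplifies to $\frac{1-d/2}{1-d}\le \frac{3}{2}$. Tightness is witnessed by the profile with $n-1$ agents at $0$ and one agent at $\frac{1}{2}$.

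For the lower bound I would first handle $n=2$ directly. Let $f$ be any strategy-proof mechanism and set $y=f(0,1)$; by left-right symmetry assume $y\le\frac{1}{2}$. On $(0,1)$ the ratio is $\frac{1/2}{y}$, which already meets $\frac{3}{2}$ whenever $y\le\frac{1}{3}$. In the remaining window $y\in(\tfrac{1}{3},\tfrac{1}{2}]$ I would invoke the profile $(y,1)$: the agent truly at $y$ could misreport as $0$ and the resulting facility $f(0,1)=y$ would coincide with their peak, so strategy-proofness forces $f(y,1)=y$. On $(y,1)$ the optimum egalitarian welfare is $\frac{1+y}{2}$ while the mechanism achieves only $y$, producing a ratio of $\frac{1+y}{2y}\ge\frac{3}{2}$, tight at $y=\frac{1}{2}$.

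To lift this to $n\ge 3$, define the induced two-agent mechanism $\tilde f(x_1,x_2):=f(x_1,x_2,x_2,\dots,x_2)$. Because strategy-proofness implies partial group strategy-proofness (already invoked in the proof of Theorem~\ref{sp}), the $n-1$ duplicated agents cannot jointly gain by simultaneously misreporting, so $\tilde f$ is strategy-proof as a two-agent mechanism. Moreover, on the $n$-agent profile $(x_1,x_2,x_2,\dots,x_2)$ the optimal egalitarian welfare equals that of the 2-agent instance $(x_1,x_2)$, so $f$'s approximation ratio on these profiles coincides with $\tilde f$'s, and the $n=2$ bound transfers.

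The main obstacle is the lower bound in the intermediate regime $y\in(\tfrac{1}{3},\tfrac{1}{2})$: the profile $(0,1)$ alone does not witness a $\frac{3}{2}$ gap, and one must locate the auxiliary profile $(y,1)$ and use strategy-proofness to pin down the mechanism's response. Once that is done, the replication reduction to $n=2$ and the routine case analysis for the upper bound are straightforward.
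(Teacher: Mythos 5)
Your proof is correct and follows essentially the same route as the paper's: the upper bound is the same case analysis on whether $\frac{1}{2}$ lies in $[x_1,x_n]$ (your interior-case calculus and the paper's direct worst-case identification at $x_1=\frac{1}{2},x_n=1$ are interchangeable), and the lower bound is the same strategy-proofness trap built around the profile $(0,1)$ --- the paper places the second agent at the facility location $\frac{1}{2}+\epsilon$ and shows she would deviate to $1$, while you place the first agent at the facility location $y$ and use the deviation to $0$ to pin down $f(y,1)=y$; these are mirror images of one another. The one genuine difference is that you explicitly lift the two-agent lower bound to $n\geq 3$ by replicating one agent and invoking partial group strategy-proofness, whereas the paper's argument is stated only for two agents; your replication step is sound (the $n-1$ clones share a location, so either all or none of them benefit from a joint misreport to a common point, and the optimal egalitarian welfare is unchanged by duplication), and it closes a gap the paper leaves implicit.
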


\section{Discussion and Future Work}
In this paper, we have studied the Nash welfare objective in the facility location problem. When agent strategic behaviour is not a concern, the \textsc{NashFL} mechanism is a reasonably balanced option. It can be approximated up to a specified additive error in polynomial time, and it attains reasonable bounds and properties of fairness and efficiency. The Nash solution surpasses the median solution in terms of fairness, and is asymptotically at least as efficient as the midpoint solution in terms of its utilitarian social welfare approximation ratio. It also satisfies Unanimous Fair Share, a fairness property that even the midpoint solution does not satisfy. The results are more negative when we restrict to a strategy-proof mechanism domain: no strategy-proof mechanism can approximate the Nash welfare up to a constant factor. However, we propose the \textsc{MidOrNearest} mechanism, which is Pareto Optimal, anonymous and has a bounded approximation ratio for optimal Nash welfare. We also prove that this mechanism meets a linear approximation bound for utilitarian social welfare and a constant bound for egalitarian social welfare.

There are many extensions for this work. Some natural variations of the problem discussed by \citet{PrTe13a} are the scenarios with 2 facilities and/or randomized mechanisms. The 2-dimensional facility location problem with both Euclidean and Manhattan metrics could also be considered, such as by \citet{Wals20a} and \citet{GoHa20a}. We could also introduce capacity constraints for the setting with multiple facilities, in which each facility has a maximum number of agents it can serve. We note that for an unbounded number of capacitated facilities, computing a Nash welfare maximizing solution is NP-hard. This follows directly from the reduction by \citet{ACL+20a}, where it is shown that it is NP-complete to check whether these exists a solution in which each agent gets zero cost even when there is no spare capacity in the capacity-constrained facility location problem. It follows that computing a Nash welfare maximizing solution is NP-hard. Although \textsc{NashFL} is not strategy-proof, it may meet a weaker notion of strategy-proofness, and may be less manipulable than other non-strategy-proof mechanisms, such as the midpoint mechanism. The question remains whether a strategy-proof mechanism can provide a linear approximation of the optimal Nash welfare.  Finally, we would like to tighten the bound on the \textsc{NashFL} mechanism's approximation ratio for utilitarian social welfare.


\begin{thebibliography}{}

\bibitem[\protect\citeauthoryear{Arrow and Intriligator}{1993}]{ArIn93a}
K.~Arrow and M.~Intriligator, editors.
\newblock {\em Handbook of Mathematical Economics}, volume~2.
\newblock Elsevier, 4 edition, 1993.

\bibitem[\protect\citeauthoryear{Aziz \bgroup \em et al.\egroup
  }{2019}]{ABM19a}
H.~Aziz, A.~Bogomolnaia, and H.~Moulin.
\newblock Fair mixing: the case of dichotomous preferences.
\newblock In {\em Proceedings of the 20th ACM Conference on Economics and
  Computation (ACM-EC)}, pages 753--781, 2019.

\bibitem[\protect\citeauthoryear{Aziz \bgroup \em et al.\egroup
  }{2020a}]{ACL+20a}
H.~Aziz, H.~Chan, B.~Lee, B.~Li, and T.~Walsh.
\newblock Facility location problem with capacity constraints: Algorithmic and
  mechanism design perspectives.
\newblock {\em Proceedings of the 34th AAAI Conference on Artificial
  Intelligence (AAAI)}, 34:1806--1813, 2020.

\bibitem[\protect\citeauthoryear{Aziz \bgroup \em et al.\egroup
  }{2020b}]{ACLP20a}
H.~Aziz, H.~Chan, B.~Lee, and D.~Parkes.
\newblock The capacity constrained facility location problem.
\newblock {\em Games and Economic Behavior}, 124:478--490, 2020.

\bibitem[\protect\citeauthoryear{Border and Jordan}{1983}]{BoJo83a}
K.~C. Border and J.~S. Jordan.
\newblock Straightforward elections, unanimity and phantom voters.
\newblock {\em The Review of Economic Studies}, 50(1):153--170, 1983.

\bibitem[\protect\citeauthoryear{Caragiannis \bgroup \em et al.\egroup
  }{2016}]{CKM+16a}
I.~Caragiannis, D.~Kurokawa, H.~Moulin, A.~D. Procaccia, N.~Shah, and J.~Wang.
\newblock The unreasonable fairness of maximum nash welfare.
\newblock In {\em Proceedings of the 17th ACM Conference on Economics and
  Computation (ACM-EC)}, pages 305---322, 2016.

\bibitem[\protect\citeauthoryear{Charikar and Guha}{2005}]{ChGu05a}
M.~Charikar and S.~Guha.
\newblock Improved combinatorial algorithms for facility location problems.
\newblock {\em SIAM Journal on Computing}, 34(4):803--824, 2005.

\bibitem[\protect\citeauthoryear{Charikar \bgroup \em et al.\egroup
  }{2001}]{CKMN01a}
M.~Charikar, S.~Khuller, D.~Mount, and G.~Narasimhan.
\newblock Algorithms for facility location problems with outliers.
\newblock In {\em Proceedings of the Twelfth Annual ACM-SIAM Symposium on
  Discrete Algorithms}, pages 642--651, 2001.

\bibitem[\protect\citeauthoryear{Cheng \bgroup \em et al.\egroup
  }{2011}]{CYZ11a}
Y.~Cheng, W.~Yu, and G.~Zhang.
\newblock Mechanisms for obnoxious facility game on a path.
\newblock In {\em International Conference on Combinatorial Optimization and
  Applications}, volume 6831 of {\em Lecture Notes in Computer Science}, pages
  262--271. Springer, 2011.

\bibitem[\protect\citeauthoryear{Cheng \bgroup \em et al.\egroup
  }{2013}]{CYZ13a}
Y.~Cheng, W.~Yu, and G.~Zhang.
\newblock Strategy-proof approximation mechanisms for an obnoxious facility
  game on networks.
\newblock {\em Theoretical Computer Science}, 497:154--163, 2013.

\bibitem[\protect\citeauthoryear{Chudak and Williamson}{1999}]{ChWi99a}
F.~Chudak and D.~Williamson.
\newblock Improved approximation algorithms for capacitated facility location
  problems.
\newblock In {\em International Conference on Integer Programming and
  Combinatorial Optimization}, pages 99--113, 1999.

\bibitem[\protect\citeauthoryear{de Fermat}{1891}]{Ferm91a}
P.~de~Fermat.
\newblock {\em {\OE}uvres de Fermat}, volume~1.
\newblock Gauthier-Villars et fils, 1891.

\bibitem[\protect\citeauthoryear{Dragu and Laver}{2019}]{DrLa19a}
T.~Dragu and M.~Laver.
\newblock Coalition governance with incomplete information.
\newblock {\em The Journal of Politics}, 81(3):923--936, 2019.

\bibitem[\protect\citeauthoryear{Feldman and Wilf}{2013}]{FeWi13a}
M.~Feldman and Y.~Wilf.
\newblock Strategyproof facility location and the least squares objective.
\newblock In {\em Proceedings of the 14th ACM Conference on Electronic
  Commerce}, pages 873--890, 2013.

\bibitem[\protect\citeauthoryear{Feldman \bgroup \em et al.\egroup
  }{2016}]{FFG16a}
M.~Feldman, A.~Fiat, and I.~Golomb.
\newblock On voting and facility location.
\newblock In {\em Proceedings of the 2016 ACM Conference on Economics and
  Computation}, pages 269--286, 2016.

\bibitem[\protect\citeauthoryear{Fotakis and Tzamos}{2014}]{FoTz14a}
D.~Fotakis and C.~Tzamos.
\newblock On the power of deterministic mechanisms for facility location games.
\newblock {\em ACM Transactions on Economics and Computation}, 2(4), 2014.

\bibitem[\protect\citeauthoryear{Fotakis and Tzamos}{2016}]{FoTz16a}
D.~Fotakis and C.~Tzamos.
\newblock Strategyproof facility location for concave cost functions.
\newblock {\em Algorithmica}, 76(1):143--167, 2016.

\bibitem[\protect\citeauthoryear{Gao}{2012}]{Gao12a}
Y.~Gao.
\newblock Uncertain models for single facility location problems on networks.
\newblock {\em Applied Mathematical Modelling}, 36(6):2592--2599, 2012.

\bibitem[\protect\citeauthoryear{Goel and Hann{-}Caruthers}{2020}]{GoHa20a}
S.~Goel and W.~Hann{-}Caruthers.
\newblock Coordinate-wise median: Not bad, not bad, pretty good.
\newblock {\em CoRR}, abs/2007.00903, 2020.

\bibitem[\protect\citeauthoryear{Guha and Khuller}{1999}]{GuKh99a}
S.~Guha and S.~Khuller.
\newblock Greedy strikes back: Improved facility location algorithms.
\newblock {\em Journal of Algorithms}, 31(1):228--248, 1999.

\bibitem[\protect\citeauthoryear{Hekmatfar}{2009}]{Hekm09a}
R.~M. Hekmatfar.
\newblock Facility location.
\newblock {\em Physica-Verlag}, 2009.

\bibitem[\protect\citeauthoryear{Jagtenberg and Mason}{2020}]{JaMa20a}
C.J. Jagtenberg and A.J. Mason.
\newblock Improving fairness in ambulance planning by time sharing.
\newblock {\em European Journal of Operational Research}, 280(3):1095--1107,
  2020.

\bibitem[\protect\citeauthoryear{Li \bgroup \em et al.\egroup }{2019}]{LMX+19a}
M.~Li, L.~Mei, Y.~Xu, G.~Zhang, and Y.~Zhao.
\newblock Facility location games with externalities.
\newblock In {\em Proceedings of the 18th International Conference on
  Autonomous Agents and MultiAgent Systems}, pages 1443--1451, 2019.

\bibitem[\protect\citeauthoryear{Mass{\'o} and De~Barreda}{2011}]{MaMo11a}
Jordi Mass{\'o} and In{\'e}s~Moreno De~Barreda.
\newblock On strategy-proofness and symmetric single-peakedness.
\newblock {\em Games and Economic Behavior}, 72(2):467 --484, 2011.

\bibitem[\protect\citeauthoryear{Mei \bgroup \em et al.\egroup
  }{2019}]{MLYZ19a}
L.~Mei, M.~Li, D.~Ye, and G.~Zhang.
\newblock Facility location games with distinct desires.
\newblock {\em Discrete Applied Mathematics}, 264:148--160, 2019.

\bibitem[\protect\citeauthoryear{Melo \bgroup \em et al.\egroup
  }{2009}]{MNS09a}
M.~T. Melo, S.~Nickel, and F.~Saldanha-Da-Gama.
\newblock Facility location and supply chain management--a review.
\newblock {\em European Journal of Operational Research}, 196(2):401--412,
  2009.

\bibitem[\protect\citeauthoryear{Meyerson}{2001}]{Meye01a}
A.~Meyerson.
\newblock Online facility location.
\newblock In {\em Proceedings of the 42nd IEEE Symposium on Foundations of
  Computer Science}, pages 426--431. IEEE, 2001.

\bibitem[\protect\citeauthoryear{Miyagawa}{2001}]{Miya01a}
E.~Miyagawa.
\newblock Locating libraries on a street.
\newblock {\em Social Choice and Welfare}, 18(3):527--541, 2001.

\bibitem[\protect\citeauthoryear{Moulin}{1980}]{Mou80a}
H.~Moulin.
\newblock On strategy-proofness and single peakedness.
\newblock {\em Public Choice}, 35(4):437--455, 1980.

\bibitem[\protect\citeauthoryear{Moulin}{2003}]{Mou03a}
H.~Moulin.
\newblock {\em Fair Division and Collective Welfare}.
\newblock {MIT} Press, 2003.

\bibitem[\protect\citeauthoryear{Nash}{1950}]{Nash50a}
J.~Nash.
\newblock The bargaining problem.
\newblock {\em Econometrica}, 18(2):155--162, 1950.

\bibitem[\protect\citeauthoryear{Procaccia and Tennenholtz}{2013}]{PrTe13a}
A.~D. Procaccia and M.~Tennenholtz.
\newblock Approximate mechanism design without money.
\newblock {\em ACM Transactions on Economics and Computation}, 1(4), 2013.

\bibitem[\protect\citeauthoryear{Shmoys \bgroup \em et al.\egroup
  }{1997}]{STA97a}
D.~B. Shmoys, E.~Tardos, and K.~Aardal.
\newblock Approximation algorithms for facility location problems.
\newblock In {\em Proceedings of the 29th Annual ACM Symposium on Theory of
  Computing (STOC)}, pages 265--274, 1997.

\bibitem[\protect\citeauthoryear{Snyder}{2006}]{Snyd06a}
L.~Snyder.
\newblock Facility location under uncertainty: a review.
\newblock {\em IIE Transactions}, 38(7):547--564, 2006.

\bibitem[\protect\citeauthoryear{Vazirani}{2012}]{Vazi12a}
V.~V. Vazirani.
\newblock Rational convex programs and efficient algorithms for 2-player nash
  and nonsymmetric bargaining games.
\newblock {\em SIAM Journal on Discrete Mathematics}, 26(3):896--918, 2012.

\bibitem[\protect\citeauthoryear{Walsh}{2020}]{Wals20a}
T.~Walsh.
\newblock Strategy proof mechanisms for facility location in euclidean and
  manhattan space.
\newblock {\em CoRR}, abs/2009.07983, 2020.

\bibitem[\protect\citeauthoryear{Wu \bgroup \em et al.\egroup }{2006}]{WZZ06a}
L.Y. Wu, X.S. Zhang, and J.L. Zhang.
\newblock Capacitated facility location problem with general setup cost.
\newblock {\em Computers \& Operations Research}, 33(5):1226--1241, 2006.

\bibitem[\protect\citeauthoryear{Xu \bgroup \em et al.\egroup }{2020}]{XLD20a}
X.~Xu, M.~Li, and L.~Duan.
\newblock Strategyproof mechanisms for activity scheduling.
\newblock In {\em Proceedings of the 19th International Joint Conference on
  Autonomous Agents and Multi-Agent Systems (AAMAS)}, pages 1539--1547, 2020.

\bibitem[\protect\citeauthoryear{Zhang and Li}{2014}]{ZhLi14a}
Q.~Zhang and M.~Li.
\newblock Strategyproof mechanism design for facility location games with
  weighted agents on a line.
\newblock {\em Journal of Combinatorial Optimization}, 28(4):756--773, 2014.

\end{thebibliography}

\begin{contact}
Alexander Lam\\
University of New South Wales\\
Sydney, Australia\\
\email{alexander.lam1@unsw.edu.au}
\end{contact}

\begin{contact}
Haris Aziz\\
University of New South Wales \\
Sydney, Australia\\
\email{haris.aziz@gmail.com}
\end{contact}
\begin{contact}
Toby Walsh\\
University of New South Wales \\
Sydney, Australia\\
\email{Toby.Walsh@data61.csiro.au}
\end{contact}
\newpage
\appendix
\section{Proof of Lemma \ref{lemma:irrational}}
\begin{customlem}{1}
Suppose there are 3 agents at locations $x_1$, $x_2$ and $x_3$. Let $c=1-x_2^2+x_1x_2+x_2x_3-x_1x_3$. If $2x_1-2x_2+c\geq 0$ and $2x_2-2x_3+c\geq 0$, then \textsc{NashFL} places the facility at agent $x_2$. If $2x_1-2x_2+c\geq 0$ and $2x_2-2x_3+c<0$, then \textsc{NashFL} places the facility at location
\[\frac{(1+\alpha)- \sqrt{(1+\alpha)^2-3(2x_3-\beta)}}{3},\]
whilst if $2x_1-2x_2+c<0$ and $2x_2-2x_3+c\geq 0$, then \textsc{NashFL} places the facility at location
\[\frac{(-1+\alpha)+ \sqrt{(-1+\alpha)^2+3(2x_1+\beta)}}{3},\]
where $\alpha=x_1+x_2+x_3$ and $\beta=1-x_1x_2-x_2x_3-x_1x_3$.
\end{customlem}
\begin{proof}
Let $\mathbf{x}=(x_1,x_2,x_3)$ and $y=\textsc{NashFL}(\mathbf{x})$. Due to Pareto optimality, we must have $x_1\leq y\leq x_3$. At facility location $y$, the Nash welfare is
\[Nash(y;\mathbf{x}) = (1-(y-x_1))(1-(x_3-y))(1-|y-x_2|).\]
To find an expression for $y$ in terms of the agent locations such that the Nash welfare is maximized, we observe the derivative of the Nash welfare with respect to $y$. We now consider the cases $y\geq x_2$ and $y\leq x_2$.\\
\textbf{Case 1:} ($y\geq x_2$)\\
To simplify our expression, we let $\alpha=1+x_1+x_2+x_3$ and $\beta = 2x_3-1+x_1x_2+x_2x_3+x_1x_3$.
Expanding the Nash Welfare expression, we have
\begin{align*}
Nash(y;\mathbf{x}) &= (1-(y-x_1))(1-(x_3-y))(1-(y-x_2))\\
&=y^3-\alpha y^2+\beta y+(1+x_1)(1+x_2)(1-x_3).
\end{align*}
The derivative of the Nash welfare with respect to $y$ is
\[\frac{dNash(y;\mathbf{x})}{dy}=3y^2-2\alpha y+\beta,\]
and it is equal to zero at
\[y=\frac{\alpha\pm \sqrt{\alpha^2-3\beta}}{3}.\]
Since the $y^3$ coefficient in the Nash cubic polynomial is positive, the local minimum point must lie to the right of the local maximum point. Therefore the Nash welfare attains its local maximum at
\[y_{max}=\frac{\alpha- \sqrt{\alpha^2-3\beta}}{3}.\]
Recall that we assume that $y\geq x_2$, so if $y_{max}< x_2$, then the Nash welfare is maximized at $y=x_2$, as it is strictly decreasing for $x_2<y\leq 1$.

Now
\begin{align*}
&y_{max}<x_2\\
&\iff \alpha^2-3\beta >(1+x_1-2x_2+x_3)\\
&\iff 2x_2-2x_3+1-x^2_2+x_1x_2+x_2x_3-x_1x_3>0,
\end{align*}
so if this inequality is satisfied, then the optimum Nash welfare in this case is at $y=x_2$.
We construct a similar argument for the other case:\\
\textbf{Case 2:} ($y\leq x_2$)

Again to simplify our expression, we let $\delta=-1+x_1+x_2+x_3$ and $\gamma=2x_1+1-x_1x_2-x_2x_3-x_1x_3$.
The Nash welfare becomes 
\begin{align*}
Nash(y;\mathbf{x}) &= (1-(y-x_1))(1-(x_3-y))(1-(x_2-y))\\
&=-y^3+\delta y^2+\gamma y+(1+x_1)(1-x_2)(1-x_3).
\end{align*}
The derivative of the Nash welfare with respect to $y$ is
\[\frac{dNash(y;\mathbf{x})}{dy}=-3y^2+2\delta y+\gamma,\]
and it is equal to zero at
\[y=\frac{\delta \pm \sqrt{\delta^2+3\gamma}}{3}.\]
Since the $y^3$ coefficient in the Nash cubic polynomial is negative, the local minimum point must lie left of the local maximum point. Therefore the Nash welfare attains its local maximum at
\[y_{max}=\frac{\delta + \sqrt{\delta^2+3\gamma}}{3}.\]
Recall that we assume that $y\leq x_2$, so if $y_{max}> x_2$, then the Nash welfare is maximized at $y=x_2$, as it is strictly increasing for $0\leq y < x_2$.

Now
\begin{align*}
&y_{max}>x_2\\
&\iff \delta^2+3\gamma>(1-x_1+2x_2-x_3)\\
&\iff 2x_1-2x_2+1-x^2_2+x_1x_2+x_2x_3-x_1x_3>0,
\end{align*}
so if this inequality is satisfied, then the optimum Nash welfare in this case is at $y=x_2$. The theorem statement follows.
\end{proof}
\section{Proof of Theorem \ref{peak}}
\begin{customthm}{3}
The Nash welfare as a function of the facility location is single-peaked.
\end{customthm}
\begin{proof}
By the Extreme Value Theorem, the Nash welfare must have a local maximum point on $[0,1]$. Denote this point as $y_{OPT}$. Since the Nash welfare is a piecewise non-constant polynomial, there are no neighbourhoods of points on $[0,1]$ such that $\frac{dNash(y;\mathbf{x})}{dy}=0$ for all points in the neighbourhood. Therefore, $y_{OPT}$ must be a strict local maximum point. Note that we must have $x_1\leq y_{OPT} \leq x_n$ due to the Pareto Optimality of \textsc{NashFL}.

Without loss of generality, suppose that $x_k \leq y_{OPT}<x_{k+1}$ for some fixed $k\in \{1,\dots,n-1\}$, or $y_OPT=x_k$ for $k=n$. We now proceed to prove that the Nash welfare is single-peaked by showing that $Nash(y;\mathbf{x})$ is strictly increasing in $[0,y_{OPT})$ and strictly decreasing in $(y_{OPT},1]$. Specifically, we show that $\frac{dNash(y;\mathbf{x})}{dy}>0$ for $y\in [0,y_{OPT})\backslash \{x_1,\dots,x_k\}$ and $\frac{dNash(y;\mathbf{x})}{dy}<0$ for $y\in (y_{OPT},1]\backslash \{x_{k+1},\dots,x_n\}$\footnote{The derivative may not exist at points $x_1,\dots,x_n$, but $Nash(y;\mathbf{x})$ is continuous and there are countably many points where the derivative does not exist.}. 

When $y\in (x_k,x_{k+1})$, our Nash welfare expression becomes
\begin{align*}
Nash(y;\mathbf{x})&=(1-|y-x_1|)\dots(1-|y-x_n|)\\
&= \prod_{i=1}^k(1-y+x_i) \prod_{i=k+1}^n(1+y-x_{i}).
\end{align*}

Since $\frac{d}{dx}\left[\prod_{i=1}^nf_i(x)\right]=\left(\prod^n_{i=1}f_i(x)\right)\left(\sum_{i=1}^n\frac{f'_i(x)}{f_i(x)}\right)$, the derivative of this function with respect to $y$ is
\begin{equation}\label{dev}
\frac{dNash(y;\mathbf{x})}{dy}=Nash(y;\mathbf{x})\left(-\sum^k_{i=1}\frac{1}{1-y+x_i}+\sum^n_{i=k+1}\frac{1}{1+y-x_{i}}\right).
\end{equation}
Since $y_{OPT}$ is a strict local maximum, there exists $\epsilon>0$ such that $\frac{dNash(y;\mathbf{x})}{dy}<0$ for $y\in (y_{OPT},y_{OPT}+\epsilon]$. The Nash welfare is non-negative, so in the region $(y_{OPT},y_{OPT}+\epsilon]$, the sum of fractions is negative. If $y_1>y_2$, $\frac{1}{1-y_1+x_i}>\frac{1}{1-y_2+x_i}$ for $i\in \{1,\dots,k\}$ and $\frac{1}{1+y_1-x_{i}}<\frac{1}{1+y_2-x_{i}}$ for $i\in \{k+1,\dots,n\}$. We therefore extend the previous interval to $\frac{dNash(y;\mathbf{x})}{dy}<0$ for $y\in (y_{OPT},x_{k+1})$.

Now consider $y\in (x_{k+1},x_{k+2})$. The Nash welfare expression changes to
\[Nash(y;\mathbf{x})= \prod_{i=1}^{k+1}(1-y+x_i)\prod_{i=k+2}^{n}(1+y-x_{i}),\]
so the derivative becomes
\[\frac{dNash(y;\mathbf{x})}{dy}=Nash(y;\mathbf{x})\left(-\sum_{i=1}^{k+1}\frac{1}{1-y+x_i}+\sum_{i=k+2}^{n}\frac{1}{1+y-x_{i}}\right).\]
This derivative is negative, as the $\frac{1}{1+y-x_{k+1}}$ term has changed to $-\frac{1}{1-y+x_{k+1}}$, and each individual fraction decreases as $y$ increases. By applying the same argument to the regions $(x_{k+2},x_{k+3}),\dots,(x_{n-1},x_n)$, we see that $\frac{dNash(y;\mathbf{x})}{dy}<0$ for $y\in (y_{OPT},x_n)\backslash \{x_{k+1},\dots,x_{n-1}\}$. Furthermore, $\frac{dNash(y;\mathbf{x})}{dy}<0$ for $y\in (x_n,1]$ as all of the fraction terms in the derivative become negative.

We now similarly show that $\frac{dNash(y;\mathbf{x})}{dy}>0$ for $y\in [0,y_{OPT})\backslash \{x_1,\dots,x_k\}$. If $y_{OPT}>x_k$, we first show that $\frac{dNash(y;\mathbf{x})}{dy}>0$ for $y\in (x_k,y_{OPT})$ (this is not necessary if $y_{OPT}=x_k$). $y_{OPT}$ is a strict local maximum, so there exists $\epsilon>0$ such that $\frac{dNash(y;\mathbf{x})}{dy}>0$ for $y\in [y_{OPT}-\epsilon,y_{OPT})$. The non-negativity of the Nash welfare implies that the sum of fractions in \ref{dev} is positive in this interval. Note that decreasing $y$ causes the sum of fractions to increase, so we have $\frac{dNash(y;\mathbf{x})}{dy}>0$ for $y\in (x_k,y_{OPT})$.

Now when we decrease $y$ to the interval $(x_{k-1},x_k)$, the derivative becomes
\[\frac{dNash(y;\mathbf{x})}{dy}=Nash(y;\mathbf{x})\left(-\sum_{i=1}^{k-1}\frac{1}{1-y+x_i}+\sum_{i=k}^{n}\frac{1}{1+y-x_{i}}\right),\]
which is positive as the $-\frac{1}{1-y+x_k}$ term changes to $\frac{1}{1+y-x_k}$, and the individual fractions increase as $y$ decreases. The same argument can be applied to regions $(x_{k-2},x_{k-1}),\dots,(x_1,x_2)$ to show that $\frac{dNash(y;\mathbf{x})}{dy}>0$ for $y\in (x_1,y_{OPT})\backslash \{x_1,\dots,x_k\}$. The derivative is also positive for $y\in [0,x_1)$ as all of the fraction terms become positive. Since $\frac{dNash(y;\mathbf{x})}{dy}>0$ for $y\in [0,y_{OPT})\backslash \{x_1,\dots,x_k\}$ and $\frac{dNash(y;\mathbf{x})}{dy}<0$ for $y\in (y_{OPT},1]\backslash \{x_{k+1},\dots,x_n\}$, the Nash welfare is strictly increasing for all $y\in [0,y_{OPT})$ and strictly decreasing for all $y\in (y_{OPT},1]$. We conclude that it is single-peaked.
\end{proof}
\section{Proof of Lemma \ref{invar}}
\begin{customlem}{2}
\textsc{NashFL} is location-invariant.
\end{customlem}
\begin{proof}
Suppose we have location profile $\mathbf{x}=(x_1,\dots,x_n)$. The Nash welfare expression for facility placement $y$ is
\[Nash(y;\mathbf{x})=(1-|y-x_1|)(1-|y-x_2|)\dots (1-|y-x_n|).\]
Let $\mathbf{x'}=(x_1+c,\dots,x_n+c)$ be the location profile where a constant $c\in [-x_1,1-x_n]$ has been added to each agent's location. The Nash welfare for this location profile is
\[Nash(y;\mathbf{x'})=\prod_{i=1}^n(1-|y-x_i-c|).\]
Denote $y_{max}=\textsc{NashFL}(\mathbf{x})$. The Nash welfare at facility location $y_{max}+c$ and agent location profile $\mathbf{x'}$ is the same as that of facility location $y_{max}$ in agent location profile $\mathbf{x}$:
\begin{align*}
Nash(y_{max}+c;\mathbf{x'})&=\prod_{i=1}^n(1-|y_{max}+c-x_i-c|)\\
&=Nash(y_{max};\mathbf{x}).
\end{align*}
We now prove by contradiction that no other facility location leads to a higher Nash welfare. Suppose that there exists some facility location $y'+c\neq y_{max}+c$ such that $Nash(y'+c;\mathbf{x'})>Nash(y_{max}+c;\mathbf{x'})$. We have the following inequality:
\begin{align*}
Nash(y';\mathbf{x})&=\prod_{i=1}^n(1-|y'+c-x_i-c|)\\
&=Nash(y'+c;\mathbf{x'})\\
&>Nash(y_{max}+c;\mathbf{x'})\\
&=Nash(y_{max};\mathbf{x}).
\end{align*}
This contradicts the assumption that $y_{max}$ maximizes the Nash welfare for $\mathbf{x}$. Therefore, $y_{max}+c$ is the optimal facility location for location profile $(x_1+c,\dots,x_n+c)$. 
\end{proof}

\section{Proof of Lemma \ref{nonneg}}
\begin{customlem}{3}
Suppose we have an agent location profile $\mathbf{x}=(x_1,\dots,x_n).$ If an agent's location $x_i$ is shifted left to $x_i'$ where $x_i'<x_i$, then under the new agent location profile $\mathbf{x}'=(x_1',\dots,x_i',\dots,x_n')=(x_1,\dots,x_i',\dots,x_n),$ $\textsc{NashFL}(\mathbf{x}')\leq \textsc{NashFL}(\mathbf{x})$.
\end{customlem}
\begin{proof}
Suppose that $x_k\leq \textsc{NashFL}(\mathbf{x})<x_{k+1}$ for some fixed $k\in \{1,\dots,n\}$ (we denote $x_{n+1}=1$)\footnote{We assume that $\textsc{NashFL}(\mathbf{x})<1$ and ignore the case where $\textsc{NashFL}(\mathbf{x})=1$, as this facility location cannot be shifted to the right.}. To prove this result, we show that the function $Nash(y;\mathbf{x}')$ is strictly decreasing w.r.t. $y$ in the interval $(\textsc{NashFL}(\mathbf{x}),1]$. Specifically, we prove that the derivative $\frac{dNash(y;\mathbf{x}')}{dy}<0$ for $y\in (\textsc{NashFL}(\mathbf{x}),1]\backslash \{x_{k+1}',\dots,x_n'\}$. This implies that any facility location in that interval cannot maximize the Nash welfare for agent location profile $\mathbf{x}'$, meaning that $\textsc{NashFL}(\mathbf{x}')\leq \textsc{NashFL}(\mathbf{x})$.

Due to the single-peaked nature of the Nash welfare (as proven in Theorem \ref{peak}), we have
\begin{align*}
\frac{dNash(y;\mathbf{x})}{dy} &= Nash(y;\mathbf{x})\bigg(-\sum_{j=1}^k \frac{1}{1-y+x_j}+\sum_{j=k+1}^n\frac{1}{1+y-x_j}\bigg)\\
&<0
\end{align*}
for $y\in (\textsc{NashFL}(\mathbf{x}),x_{k+1})$. Since the Nash welfare must be positive when $y\in [0,1]$, we deduce that the sum of fractions is negative for $y\in (\textsc{NashFL}(\mathbf{x}),x_{k+1})$.

Now consider the agent location profile $\mathbf{x}'=(x_1',\dots,x_n')$ where $x_j'=x_j$ for $j\neq i$ and $x_i'<x_i$. By taking cases, we show that the derivative of the Nash welfare corresponding to this new location profile $\frac{dNash(y;\mathbf{x}')}{dy}$ is also negative for $y\in (\textsc{NashFL}(\mathbf{x}),1]\backslash \{x_{k+1}',\dots,x_n'\}$.

\textbf{Case 1:}

Suppose that after the shift, there are still $k$ agents left of $\textsc{NashFL}(\mathbf{x})$\footnote{Note $\textsc{NashFL}(\mathbf{x})$ is defined for $\mathbf{x}=(x_1,\dots,x_n)$ and does not change after the shift.}. If $x_i\leq \textsc{NashFL}(\mathbf{x})$, we have the following derivative expression
\begin{align*}
\frac{dNash(y;\mathbf{x}')}{dy}=&Nash(y;\mathbf{x}')\bigg(-\frac{1}{1-y+x_i'}-\sum_{\substack{j=1 \\ j\neq i}}^k\frac{1}{1-y+x_j}+\sum_{j=k+1}^n\frac{1}{1+y-x_j}\bigg),
\end{align*}
and if $x_i> \textsc{NashFL}(\mathbf{x})$, we have
\begin{align*}
\frac{dNash(y;\mathbf{x}')}{dy}=&Nash(y;\mathbf{x}')\bigg(-\sum_{j=1}^k\frac{1}{1-y+x_j}+\sum_{\substack{j=k+1 \\ j\neq i}}^n\frac{1}{1+y-x_j}+\frac{1}{1+y-x_i'}\bigg)
\end{align*}
for $y\in (\textsc{NashFL}(\mathbf{x}),x_{k+1}')$.
Recalling that $-\sum_{j=1}^k \frac{1}{1-y+x_j} +\sum_{j=k+1}^n\frac{1}{1+y-x_j}<0$ for $y\in (\textsc{NashFL}(\mathbf{x}),x_{k+1})$ and noting that $-\frac{1}{1-y+x_i'}<-\frac{1}{1-y+x_i}$ and $\frac{1}{1+y-x_i'}<\frac{1}{1+y-x_i}$, we deduce that for both cases of $x_i\leq \textsc{NashFL}(\mathbf{x})$ and $x_i>\textsc{NashFL}(\mathbf{x})$, $\frac{dNash(y;x')}{dy}<0$ for $y\in (\textsc{NashFL}(\mathbf{x}),x_{k+1}')$\footnote{Note the length of the interval does not increase as we transition from $(\textsc{NashFL}(\mathbf{x}),x_{k+1})$ to $(\textsc{NashFL}(\mathbf{x}),x_{k+1}')$.}. Now if we move $y$ to interval $(x_{k+1}',x_{k+2}')$, the derivative remains negative as each individual fraction decreases, and the $\frac{1}{1+y-x_{k+1}}$ term changes to $-\frac{1}{1-y+x_{k+1}}$. Applying this argument to regions $(x_{k+2}',x_{k+3}'),\dots,(x_n',1]$, we see that $\frac{dNash(y;x')}{dy}<0$ for $y\in (\textsc{NashFL}(\mathbf{x}),1]\backslash \{x_{k+1}',\dots,x_{n}'\}$.

\textbf{Case 2:}

Suppose that after the shift there are $k+1$ agents left of $\textsc{NashFL}(\mathbf{x})$ (as $x_i>\textsc{NashFL}(\mathbf{x})$ and $x_i'\leq \textsc{NashFL}(\mathbf{x}))$. As a result, the fractional term corresponding to $x_i$ has changed from $\frac{1}{1+y-x_i}$ to $-\frac{1}{1-y+x_i'}$. We therefore have 
\begin{align*}
\bigg(\sum_{\substack{j=k+1 \\ j\neq i}}^n\frac{1}{1+y-x_j}-\frac{1}{1-y+x_i'}-\sum_{j=1}^k\frac{1}{1-y+x_j}\bigg)&<\bigg(-\sum_{j=1}^k \frac{1}{1-y+x_j} +\sum_{j=k+1}^n\frac{1}{1+y-x_j}\bigg)\\
&<0,
\end{align*}
implying that in this case, $\frac{dNash(y;\mathbf{x}')}{dy}<0$ for $y\in (\textsc{NashFL}(\mathbf{x}),x_{k+2}')$. Applying the same argument as in the previous case, we deduce that $\frac{dNash(y;x')}{dy}<0$ for $y\in (\textsc{NashFL}(\mathbf{x}),1]\backslash \{x_{k+2}',\dots,x_{n}'\}$.

By exhaustion of cases, we have shown that $Nash(y;\mathbf{x}')$ is strictly decreasing w.r.t $y$ in the interval $(\textsc{NashFL}(\mathbf{x}),1]$, implying that $\textsc{NashFL}(\mathbf{x}')\notin (\textsc{NashFL}(\mathbf{x}),1]$.
\end{proof}
\section{Proof of Lemma \ref{smaller}}
\begin{customlem}{4}
Suppose we have two different agent location profiles $\mathbf{x}=(x_1,\dots,x_n)$ and $\mathbf{x}'=(x_1',\dots,x_n')$. The following inequality holds:
\[|\textsc{NashFL}(\mathbf{x})-\textsc{NashFL}(\mathbf{x'})|\leq \max_{i\in N} |x_i-x_i'|.\]
\end{customlem}
\begin{proof}
Let $c=\max_{i\in N} |x_i-x_i'|$ and construct the agent location profiles $\mathbf{x}_{+c}=(x_1+c,\dots,x_n+c)$ and $\mathbf{x}_{-c}=(x_1-c,\dots,x_n-c)$. From Lemma \ref{invar}, we know that $\textsc{NashFL}(\mathbf{x}_{+c})=\textsc{NashFL}(\mathbf{x})+c$ and $\textsc{NashFL}(\mathbf{x}_{-c})=\textsc{NashFL}(\mathbf{x})-c$. Now for all $i\in N$, $x_i+c-x_i'\geq 0$, so by Theorem \ref{nonneg}, we have
\[\textsc{NashFL}(\mathbf{x}')\leq \textsc{NashFL}(\mathbf{x}_{+c}).\]
In other words, if we construct location profile $\mathbf{x}'$ from $\mathbf{x}_{+c}$ by shifting agents, none of the agents would be shifted to the right, so by Theorem \ref{nonneg}, $\textsc{NashFL}(\mathbf{x}')$ cannot be right of $\textsc{NashFL}(\mathbf{x}_{+c})$. By making a similar argument with $\mathbf{x}_{-c}$, we have
\[\textsc{NashFL}(\mathbf{x}')\geq \textsc{NashFL}(\mathbf{x}_{-c}).\]
Combining our inequalities, we have
\[\textsc{NashFL}(\mathbf{x})-c\leq \textsc{NashFL}(\mathbf{x}')\leq \textsc{NashFL}(\mathbf{x})+c\]
\[\implies -c\leq \textsc{NashFL}(\mathbf{x}')-\textsc{NashFL}(\mathbf{x})\leq c\]
\[\implies |\textsc{NashFL}(\mathbf{x})-\textsc{NashFL}(\mathbf{x'})|\leq \max_{i\in N} |x_i-x_i'|.\] 
\end{proof}
\section{Proof of Lemma \ref{yesyes}}
\begin{customlem}{5}
Let there be $k$ agents at location $x$ (where $0<x\leq 1$) and $n-k$ agents at location $0$. If $\frac{k}{n}\geq \frac{1}{2-x}$, then \textsc{NashFL} places the facility at $x$. If $\frac{k}{n}\leq \frac{1-x}{2-x}$, then \textsc{NashFL} places the facility at $0$. If neither of these inequalities hold, then \textsc{NashFL} places the facility at $x-1+\frac{2k-kx}{n}$.
\end{customlem}
\begin{proof}
Let $\mathbf{x}=(\underbrace{0,\dots,0}_{n-k},\underbrace{x,\dots,x}_{k})$. The Nash welfare is
\[Nash(y;\mathbf{x})=(1-y)^{n-k}(1-x+y)^k,\]
so its derivative with respect to facility location $y$ is
\begin{align*}
\frac{dNash(y;\mathbf{x})}{dy}&=k(1-y)^{n-k}(1-x+y)^{k-1}-(n-k)(1-y)^{n-k-1}(1-x+y)^k\\
&=(1-x+y)^{k-1}(1-y)^{n-k-1}(k(1-y)-(n-k)(1-x+y)).
\end{align*}
We assume that $y<1$, as $y=1$ leads to a Nash welfare of $0$. Now $(1-x+y)^{k-1}(1-y)^{n-k-1}>0$ for all $x\in (0,1]$ and $y\in [0,1]$. Therefore if $k(1-y)-(n-k)(1-x+y)\geq 0$ for all $y\in [0,x]$, then the Nash welfare is non-decreasing with respect to $y$, so it attains its maximum at $y=x$. We minimize this expression with respect to $y$ by substituting $y=x$, giving us the inequality $2k-kx-n\geq 0$. This is rearranged to form $\frac{k}{n}\geq \frac{1}{2-x}$.

Similarly, if $k(1-y)-(n-k)(1-x+y)\leq 0$ for all $y\in [0,x]$, then the Nash welfare is non-increasing with respect to $y$ and attains its maximum at $y=0$. We maximize this expression with respect to $y$ by substituting $y=0$ to attain the inequality $(2k-n)+x(n-k)\leq 0$. This is rearranged to form $\frac{k}{n}\leq \frac{1-x}{2-x}$.

If neither of these inequalities hold, then there exists $y$ in $(0,x)$ such that $\frac{dNash(y;\mathbf{x})}{dy}=0$. By rearranging terms in $k(1-y)-(n-k)(1-x+y)$, we see that this occurs when $y=x-1+\frac{2k-kx}{n}$. 
\end{proof}
\section{Proof of Theorem \ref{asdf}}
\begin{customthm}{6}
\textsc{Mid} has an approximation ratio for utilitarian social welfare of $2-\frac{2}{n}$.
\end{customthm}
\begin{proof}
Suppose that $x_1=0$ and $x_n=x$, where $x\in(0,1]$. From the proof of Theorem \ref{ez}, we have the following upper bound on the median solution utilitarian social welfare
\[
n-\sum^n_{i=1}|y_{MED}-x_i|\leq n-x.
\]
Since $x_i\in [0,x]$, we also have the following lower bound on the midpoint solution utilitarian social welfare
\[n-\sum^n_{i=1}|\frac{x}{2}-x_i|\geq n-\frac{nx}{2}.\]
We therefore have 
\[\frac{n-\sum^n_{i=1}|x_i-y_{MED}|}{n-\sum^n_{i=1}|x_i-y|}\leq \frac{2(n-x)}{2n-nx},\]
with equality when we have $n-1$ agents at $0$ and $1$ agent at $x$. This fraction also attains a maximum of $2-\frac{2}{n}$ when $x=1$. The theorem statement follows. 
\end{proof}

\section{Proof of Theorem~\ref{thm:midornear USW}}
\begin{customthm}{11}
\textsc{MidOrNearest} is a $(2-\frac{2}{n})-$approximation of the utilitarian social welfare.
\end{customthm}
\begin{proof}
For simplicity, let $f$ denote the \textsc{MidOrNearest} mechanism. We first show that any agent location profile $\mathbf{x}$ where all agent locations are strictly above (resp. below) $\frac{1}{2}$ can be modified to form a location profile $\mathbf{x}'$ where $x_1'\leq \frac{1}{2}\leq x_n'$, without decreasing the (utilitarian) welfare ratio. Let $\mathbf{x}$ be a location profile where all agents are strictly above $\frac{1}{2}$ (i.e. $x_1>\frac{1}{2}$), and let $\mathbf{x}'$ be such that $x_1'=\frac{1}{2}$ and $x_i'=x_i$ for $i\in \{2,\dots,n\}$. The optimal welfare decreases by $(x_1-\frac{1}{2})$, and the welfare corresponding to $f$ decreases by $(n-1)(x_1-\frac{1}{2})$ as the facility under $f$ moves to $x_1'=\frac{1}{2}$. We therefore have
\[\frac{USW(\textsc{Med}(\mathbf{x}'),\mathbf{x}')}{USW(f(\mathbf{x}'),\mathbf{x}')}=\frac{USW(\textsc{Med}(\mathbf{x}),\mathbf{x})-(x_1-\frac{1}{2})}{USW(f(\mathbf{x}),\mathbf{x})-(n-1)(x_1-\frac{1}{2})}\geq \frac{USW(\textsc{Med}(\mathbf{x}),\mathbf{x})}{USW(f(\mathbf{x}),\mathbf{x})}.\]
Due to symmetry, this expression also holds when $\mathbf{x}$ has all agents strictly below $\frac{1}{2}$.

Now let $\mathbf{x}'$ be an arbitrary location profile where $x_1'\leq \frac{1}{2}\leq x_n'$. Note that $f$ places the facility at $\frac{1}{2}$. We show that it can be transformed into a profile where all agents are located at an endpoint, without decreasing the welfare ratio. Let $x'_{med}:= x'_{\floor{\frac{n}{2}}}$ denote the median agent, and suppose without loss of generality that $x'_{med}\geq \frac{1}{2}$. Define $\mathbf{x}''$ as the location profile where $x''_i=1$ for all $i\in \{i:i>med\}$ and $x''_i=x'_i$ for all $i\in \{i:i\leq med\}$. Both the optimal welfare and the welfare under $f$ decrease by $\sum_{i>med}(1-x'_i)$. We therefore have
\[\frac{USW(\textsc{Med}(\mathbf{x}''),\mathbf{x}'')}{USW(f(\mathbf{x}''),\mathbf{x}'')}=\frac{USW(\textsc{Med}(\mathbf{x}'),\mathbf{x}')-\sum_{i>med}(1-x'_i)}{USW(f(\mathbf{x}'),\mathbf{x}')-\sum_{i>med}(1-x'_i)}\geq \frac{USW(\textsc{Med}(\mathbf{x}'),\mathbf{x}')}{USW(f(\mathbf{x}'),\mathbf{x}')}.\]

Now define $\mathbf{x}'''$ as the location profile where $x'''_{med}=1$ and $x'''_i=x''_i$ for all $i\neq med$. The optimal welfare decreases by $(1-x''_{med})$ if $n$ is even, and it does not change if $n$ is odd. Also, the welfare under $f$ decreases by $(1-x''_{med})$. We therefore have
\[\frac{USW(\textsc{Med}(\mathbf{x}'''),\mathbf{x}''')}{USW(f(\mathbf{x}'''),\mathbf{x}''')}=\frac{USW(\textsc{Med}(\mathbf{x}''),\mathbf{x}'')-(1-x''_{med})\mathbb{I}_{n \text{ even}}}{USW(f(\mathbf{x}''),\mathbf{x}'')-(1-x''_{med})}\geq \frac{USW(\textsc{Med}(\mathbf{x}''),\mathbf{x}'')}{USW(f(\mathbf{x}''),\mathbf{x}'')}.\]

Finally, define $\mathbf{x}''''$ as the location profile where $x''''_i=0$ for $i\in S:= \{i: x_i\leq \frac{1}{2}\}$ and $x''''_i=1$ for $i\in N\backslash S$. The optimal welfare decreases by $\sum_{i\in S}x'''_i$ from the agents in $S$ moving to $0$, and it also increases by $\sum_{i\in N\backslash S}(1-x'''_i)$ from the agents in $N\backslash S$ moving towards the median agent at $1$. The welfare corresponding to $f$ decreases by $\sum_{i\in S}x'''_i + \sum_{i\in N\backslash S}(1-x'''_i)$ from the agents moving away from the facility at $\frac{1}{2}$. We finally have
\begin{align*}
\frac{USW(\textsc{Med}(\mathbf{x}''''),\mathbf{x}'''')}{USW(f(\mathbf{x}''''),\mathbf{x}'''')}&=\frac{USW(\textsc{Med}(\mathbf{x}'''),\mathbf{x}''')-\sum_{i\in S}x'''_i+\sum_{i\in N\backslash S}(1-x'''_i)}{USW(f(\mathbf{x}'''),\mathbf{x}''')-\sum_{i\in S}x'''_i - \sum_{i\in N\backslash S}(1-x'''_i)}\\
&\geq \frac{USW(\textsc{Med}(\mathbf{x}'''),\mathbf{x}''')}{USW(f(\mathbf{x}'''),\mathbf{x}''')}\\
&\geq \frac{USW(\textsc{Med}(\mathbf{x}'),\mathbf{x}')}{USW(f(\mathbf{x}'),\mathbf{x}')}.
\end{align*}
We have shown that any location profile can be transformed into one where all agents are located at endpoints without decreasing the welfare ratio, meaning that the welfare ratio is maximized at such a profile. This implies that
\[\max_{\mathbf{x}\in [0,1]^n}\frac{USW(\textsc{Med}(\mathbf{x}),\mathbf{x})}{USW(f(\mathbf{x}),\mathbf{x})}=\max_{\mathbf{x}\in \{0,1\}^n}\frac{USW(\textsc{Med}(\mathbf{x}),\mathbf{x})}{USW(f(\mathbf{x}),\mathbf{x})}.\]
Now among the location profiles where agents are restricted to endpoints, the profile maximizing the welfare ratio has $n-1$ agents at one endpoint and $1$ agent at the other endpoint. This can be seen by taking an endpoint profile where there are at least $2$ agents at an endpoint, and moving agents from the endpoint with less agents to the endpoint with more agents, increasing the optimal welfare while keeping the welfare corresponding to $f$ constant. The approximation ratio of $2-\frac{2}{n}$ corresponds to the profile with $n-1$ agents at one endpoint and $1$ agent at the other endpoint, which has been proven to maximize the welfare ratio.
\end{proof}
\section{Proof of Theorem \ref{midoregal}}
\begin{customthm}{12}
\textsc{MidOrNearest} is a $\frac{3}{2}-$approximation of the egalitarian social welfare. No strategy-proof mechanism has a smaller approximation ratio.
\end{customthm}
\begin{proof}
There are three cases. In the first case, $x_1\leq \frac{1}{2}\leq x_n$ and the \textsc{MidOrNearest} mechanism locates the facility at $\frac{1}{2}$. The worst case for the approximation ratio occurs when $x_1=\frac{1}{2}$ and $x_n=1$. This gives an egalitarian social welfare of $\frac{1}{2}$ compared to an optimum of $\frac{3}{4}$. The approximation ratio is therefore $\frac{3}{2}$ at best. In the second case, $x_n\leq \frac{1}{2}$. The egalitarian social welfare is $1-(x_n-x_1)$ units. The optimal egalitarian social welfare is $1-\frac{(x_n-x_1)}{2}$ units. Define $f(z)=\frac{1-z}{1-2z}$ where $z=\frac{x_n-x_i}{2}$. For $z\in [0,\frac{1}{4}]$, $f(z)$ takes a maximum of $\frac{3}{2}$ at $z=\frac{1}{4}$, corresponding to $x_n=\frac{1}{2}$ and $x_1=0$. The approximation ratio is therefore $\frac{3}{2}$ at best. The third case, with $x_1>\frac{1}{2}$ is symmetric to the second case.

To show that no strategy-proof mechanism can have a smaller approximation ratio, suppose the opposite and that a mechanism exists with a smaller ratio. Consider two agents, $x_1=0$ and $x_2=1$. Suppose that the facility is located at $\frac{1}{2}+\epsilon$ for $\epsilon\geq 0$. The case where the facility is located at $\frac{1}{2}-\epsilon$ is dual. Suppose the second agent reports $x_2=\frac{1}{2}+\epsilon$. The optimal egalitarian social welfare is $\frac{3}{4}-\frac{\epsilon}{2}$. If the mechanism is to achieve an approximation ratio of less than $\frac{3}{2}$ then the minimum utility must be less than $\frac{1}{2}-\frac{\epsilon}{3}$. The facility must therefore be in $[0,\frac{1}{2}+\frac{\epsilon}{3})$. Therefore if there are two agents at $x_1=0$ and $x_2=\frac{1}{2}+\epsilon$, the second agent has an incentive to misreport their location as $x_2=1$. This contradicts the assumption that there is a strategy-proof mechanism with a smaller approximation ratio. 
\end{proof}
\end{document}